\newcommand{\myunit}{1 cm}
\tikzset{node style sp/.style={draw,circle,minimum size=\myunit}}
\tikzset{node style el/.style={draw,ellipse,minimum size=\myunit}}
\newtheorem{example}{Example}
\newtheorem{theorem}{Theorem}
\newtheorem{lemma}{Lemma}
\newenvironment{proof}{{\bfseries Proof }}{\hfill$\square$}
\newcounter{thealgorithm}
\newenvironment{myalgorithm}[1]{\refstepcounter{thealgorithm}\noindent \rule{\linewidth}{0.03cm}
\textbf{Algorithm \arabic{thealgorithm}} {\scshape #1}\vspace{-3mm}\newline
\rule{\linewidth}{0.03cm}}{\vspace{-3mm}\rule{\linewidth}{0.03cm}}
\newcommand{\secref}[1]{\S\ref{#1}}
\begin{document}
\onecolumn \maketitle \normalsize \vfill

\section{Introduction}


\begin{LONG}
\subsection{Context and motivation}
\noindent
\end{LONG}
\begin{SHORT}
{\bf Context and motivation.}
\end{SHORT}
Similar to what happens between humans in the real world, in open
multi-agent systems~\cite{David01} distributed over the Internet,
such as online social networks or wiki technologies, agents often form
coalitions by agreeing to act as a whole in order to achieve certain
common goals. For instance, agents may wish to collaborate in order to
jointly create and use a group cryptographic key for ensuring the
confidentiality and/or integrity of information shared within the group,
e.g.~\cite{RafaeliHutchison03}, or to partake in a mix network or some
other anonymous remailer to achieve unobservability of communications,
e.g.~\cite{Chaum81}, or to create secret interest groups within online
social networks, e.g.~\cite{SorniottiMolva10}. However, agent coalitions
are not always a desirable feature of a system, as malicious or corrupt
agents may collaborate in order to subvert or attack the system. For
instance, such agents may collaborate to attack the information in
transit over different channels in a web service architecture or in a
distributed wired and/or wireless computer network,
e.g.~\cite{Wiehler04}, or they might forge and spread false information
within the system, e.g.~\cite{RePEc:jas:jasssj:2006-13-4}.

In order to be able to rigorously formalize and reason about such
positive and negative properties of agent coalitions, and thereby allow
for the prevention or, at least, the identification of the entailed
vulnerabilities, a number of different formal approaches have been
recently proposed,
\begin{LONG}
such
as~\cite{quantified2,quantified,atl,strategy,cl-pc2,pauly,cl-pc3,cl-pc,quantified3}.
\end{LONG}
\begin{SHORT}
such
as~\cite{quantified,atl,strategy,cl-pc2,pauly,cl-pc3,cl-pc}.
\end{SHORT}



 


In this paper, we consider the problem of \emph{hidden coalitions}: a
coalition is hidden in a system when its existence and the purposes it
aims to achieve are not known to the system. Hidden coalitions carry out
\emph{underhand attacks}, a term that we borrow from military terminology. These attacks are particularly subtle since
the agents that perform them are not outsiders but rather members of the
system whose security properties are posed under threat. Moreover, the
mere suspect that a group of individuals act as a whole is 
typically
insufficient to come to a decision about their permanence as members of
the system; this, of course, depends also on the nature of the system
and the information it contains, since in the presence of highly
security-sensitive information, systems may anyway opt for the exclusion
of all suspected agents. However, in general, systems, and even more so
open ones, will want to adopt a less restrictive policy, excluding only
those agents whose malice has indeed been proved. Therefore, the defense
against underhand attacks by hidden coalitions is a fundamental but
complex matter.



Problems of a similar kind have been studied, for instance, in Game
Theory~\cite{game,pauly3} in relation to the nature of collaboration and
competition, and from the viewpoint of modeling group formation under
the constraints of possible given goals. However, underhand attacks by
hidden coalitions pose security problems that cannot be dealt with such
traditional means. 
\begin{SHORT}
Nor can they solved by a simple, monotonic, approach based on Coalition
Logic(s) such as~\cite{quantified,cl-pc2,pauly,cl-pc}.
\end{SHORT}
\begin{LONG}
Nor can they solved by a simple, monotonic, approach based on Coalition
Logic(s)~\cite{quantified,cl-pc2,pauly,cl-pc}, which is currently one of
the most successful formalisms for reasoning about coalitions.
\end{LONG}





To illustrate all this further, consider the following concrete example
from an online social network such as Facebook, where abuse, misuse or
compromise of an account can be reported to the system administration.
In particular, a group of agents (in this case, Facebook users) can report a fake profile:
\begin{quote}\footnotesize
You can report a profile that violates Facebook's Statement of Rights
and Responsibilities by clicking the ``Report/Block this Person'' link
in the bottom left column of the profile, selecting ``Fake profile'' as
the reason, and adding the appropriate information. [...] 
\emph{(Excerpt from} \url{http://www.facebook.com/help/?search=fake}\emph{)}
\end{quote}
The administrator of the system gives an ultimatum to the agent that
uses the reported profile and then may, 
eventually, close it. 
An underhand coalition can exploit this report mechanism to attack an
agent who possesses a ``lawful'' original profile: 
at first they create a
fake profile with personal information and photos of the agent under
attack, and then they become friends of her. After that, they report the
original profile so that the administrator closes it. The report is a
lawful action, and by creating the new profile and having a big enough
number of agents who report the same profile no suspicion about the hidden coalition is raised, so that the attack succeeds.

\begin{LONG}
\subsection{Contributions}
\noindent\end{LONG}
\begin{SHORT}
{\bf Contributions.}
\end{SHORT}
A formalism to define and reason about such hidden coalitions is thus
needed. Indeed, Coalition Logic allows one to define coalitions that are
explicit (i.e.~not hidden) and is characterized by monotonic permissions
to act in groups and individually. What is missing, however, is the
notion of hidden coalition and a method to block the underhand attacks
such coalitions carry out. The idea underlying our approach is to
circumscribe the problem in algebraic terms, by defining a system that
can be represented by a coalition logic, and then activate a
non-monotonic control on the system itself to block the underhand
attacks that hidden coalitions are attempting to carry out.

More specifically, we consider multi-agent systems whose security
properties depend on the values of sets of logical formulas of
propositional logic, which we call the \emph{critical} (or \emph{security}) \emph{formulas} of the
systems: for concreteness, we say that a system is \emph{secure} if all
the critical formulas are false, and is thus insecure if one or more
critical formula is true. (Of course, we could also invert the
definition and consider a system secure when all critical formulas are
true.) The system agents control the critical formulas in that they
control the propositional variables that formulas are built from: we
assume that every variable of the system is controlled by an agent,
where the variables controlled by an agent are controlled just by that
agent without interference by any other agent. The actions performed by
each agent consist thus in changing some of the truth values of the
variables assigned to that agent, which means that the values of the
critical formulas can change due to actions performed by the agents,
including in particular malicious insider agents who form hidden
coalitions to attack the system by making critical formulas become true.
Returning to the Facebook example, this is exactly what happens when
agents report the original profile as fake by setting the flag (clicking
on the link).\footnote{In this paper, we do not consider how the
administrator decides to close the profile, nor do we consider in detail
the non-monotonic aspects of how agents enter/exit/are banned from a
system or enter/exit a hidden coalition, or how members of a hidden
coalition synchronize/organize their actions. All this will be subject
of future work.}

At each instant of time, agents ask the system to carry out the actions
they wish to perform, i.e.~changing the truth value of the variables
they control, and the system has to decide whether to allow such
actions, but without knowing of the existence of possible hidden
coalitions and thus at the risk of the system becoming insecure. To
block such attacks, we formalize here a \emph{deterministic blocking
method}, implemented by a greedy algorithm, which blocks the actions of
potentially dangerous agents. We prove that this method is sound and
complete, in that it does not allow a system to go in an insecure state
when it starts from a secure state and it ensures that every secure
state can be reached from any secure state. However, this algorithm is
not optimal as it does not block the smallest set of potentially
dangerous agents.
\begin{SHORT}
We thus introduce also a \emph{non-deterministic blocking method}, which
we obtain by extending the deterministic method with an oracle to
determine the minimum set of agents to block so to ensure the security
of the system. We show that the soundness and completeness result
extends to this non-deterministic method.
\end{SHORT}
\begin{LONG}

We thus introduce also a \emph{non-deterministic blocking method}, which
we obtain by extending the deterministic method with an oracle to
determine the minimum set of agents to block so to ensure the security
of the system. We show that the soundness and completeness result
extends to this non-deterministic method as well.
\end{LONG}

We also calculate the computational cost of our two blocking methods.
This computational analysis is completed by determining upper bound
results for the problem of finding a set of agents to be blocked so to
prevent system transitions into insecure states, and the problem of
finding an optimal set of agents satisfying the above condition.

\begin{LONG}
\subsection{Organization of the paper} 
\end{LONG}
\begin{SHORT}
{\bf Organization.} 
\end{SHORT}
In \secref{system}, we introduce our approach to the
problem of blocking underhand attacks by hidden coalitions.
\secref{method} and \secref{non-determ} respectively introduce our
deterministic and non-deterministic blocking methods, giving concrete
examples for their application. In \secref{properties}, we study the
computational aspects of these two methods, calculating in particular
their computational cost, and show that they are both sound and
complete. Finally, in \secref{concl}, we summarize our main results,
discuss related work and sketch future work.
\begin{SHORT}
Proofs of the formal results are given in~\cite{CKV10-arxiv}.
\end{SHORT}
\begin{LONG}
\bigskip 
\end{LONG}

\section{An approach to the problem of blocking underhand attacks}
\label{system}
\noindent We introduce our approach to the problem of blocking underhand
attacks.
\begin{SHORT}
We also recall some basic notions and, in particular, the relevant
notions of the Coalition Logic of Propositional Control
CL-PC~\cite{cl-pc2,cl-pc}.
\end{SHORT}
\begin{LONG}
We also recall some basic notions and, in particular, the relevant
notions of the Coalition Logic of Propositional Control
CL-PC~\cite{cl-pc2,cl-pc}, which provides a starting point for our
approach.
\end{LONG}


%

\subsection{Syntax}

We consider multi-agent systems $\mathcal{S}$ that are described by a
set of \emph{critical} (or \emph{security}) \emph{formulas} $\Phi$ and
by a temporal sequence $\sigma: T \rightarrow \Theta(\Phi)$, with $T$
the temporal axis and $\Theta(\Phi)$ the propositional assignment in the
set of formulas. In this work, we focus only on the formulas in $\Phi$,
which represent the security-critical characteristics of a system (which
depend on the application and which we thus do not describe further
here, as our approach is independent of the particular application). We
say that a system is \emph{secure} if all the critical formulas are
false, and it becomes insecure if one or more $\phi \in \Phi$ becomes
true.\begin{LONG}\footnote{As we remarked above, we could also invert
this and call a system secure when all critical formulas are true; we
would then just need to modify our methods accordingly.}\end{LONG}
Hence, the \emph{state} of a system is defined by the value of the
propositional variables that occur in the critical formulas of $\Phi$.

The \emph{agents} of a system $\mathcal{S}$ control the set $\Phi$ and
hence the state of $\mathcal{S}$. We require that there is no formula in
our systems that cannot change its truth value. Moreover, the
distribution of the variables to the agents should be such that one
formula cannot be controlled by one single agent, but rather different
agents control one formula, and every formula is controlled by some
agents. In particular, for a set $\mathit{Ag}$ of system agents:
\begin{itemize} 
\item every variable of the system is controlled by an agent $a \in
\mathit{Ag}$, and
\item the variables controlled by an agent are controlled just by that
agent without interference by any other agent.
\end{itemize}
The actions performed by each agent $a \in \mathit{Ag}$ are thus the
changing of the truth values of the variables assigned to $a$.
The agents we consider are \emph{intelligent agents} in the sense of~\cite{ag2,ag3}: they are autonomous,
have social ability, reactivity and pro-activeness, and have mental
attitudes, i.e.~states that classify the relation of the agents and the cognitive environment. In our approach, we consider intelligent agents
but do not make specific assumptions about their mental attitudes,
except for their collaborative attitudes that constitute a threat to
(the security of) the system.\footnote{An extension of the work presented here with a detailed formalization of the mental and collaborative attitudes of the agents will be subject of future work.}

In Game Theory~\cite{strategy}, strategies are often associated with a
preference relation for each agent that indicates which output the agent
is going to select in presence of alternatives. In our approach, agents
change the value of ``their'' variables according to their strategies
and create coalitions with other agents so to be more expressive: by
collaborating, agents can change the values of different variables and
thus, ultimately, of the critical formulas that comprise such variables.
The novelty in this work is that we don't deal just with coalitions that
are known by the system but also with hidden coalitions, whose existence
and purposes are unknown by the system.

Let us now formalize the language of our approach. Following CL-PC,
given a set $\mathit{Ag}$ of agents, a set $\mathit{Vars}$ of
propositional variables, the usual operators $\neg$ and $\vee$ of
classic propositional logic, and the \emph{cooperation mode} $\Diamond$,
we consider formulas built using the following grammar:
\begin{displaymath}
\phi ::= \top \mid p \mid\neg{\phi} \mid \phi \vee \phi \mid \Diamond_{C}\phi 
\end{displaymath}%
where $p\in \mathit{Vars}$, $C \subseteq Ag$, and $\Diamond_{C}\phi$ is
a \emph{cooperation formula}. Slightly abusing notation, we denote with
$\mathit{Vars}(\phi)$ the set of propositional variables that occur in
$\phi$ and with $\mathit{Ag}(\phi)$ the agents that control the
variables in $\mathit{Vars}(\phi)$.  $\Diamond_{C}\phi$
expresses that the coalition $C$ has the contingent ability to achieve
$\phi$; this means that the members of $C$ control some variables of $\phi$ and have choices for $\phi$ such that if they make
these choices and nothing else changes, then $\phi$ will be
true. 


\subsection{Semantics}
\label{semantics}

\begin{LONG}
A model is a tuple 
\begin{displaymath}
\mathcal{M} = \langle \mathit{Ag}, \mathit{Vars},
\mathit{Vars}|_1,...,\mathit{Vars}|_n,\theta\rangle\,,
\end{displaymath} 
where:
\begin{itemize}
\item $\mathit{Ag} = \{1,...,n\}$ is a finite, non-empty set of agents;
\item $\mathit{Vars} = \{p,q,...\}$ is a finite, non-empty set of
propositional variables;
\item $\mathit{Vars}|_1,...,\mathit{Vars}|_n$ is a partition of $\mathit{Vars}$ among the members
of $\mathit{Ag}$, with the intended interpretation that $\mathit{Vars}|_i$ is the
subset of $\mathit{Vars}$ representing those variables under the control
of agent $i\in \mathit{Ag}$;
\item $\theta: \mathit{Vars} \rightarrow \{\top,\bot\}$ is a
propositional valuation function that determines the truth value of each
propositional variable.
\end{itemize}
Since $\mathit{Vars}|_1,...,\mathit{Vars}|_n$ is a partition of
$\mathit{Vars}$, we have:
\begin{enumerate}
\item $\mathit{Vars} = \mathit{Vars}|_1 \cup ...\cup \mathit{Vars}|_n$
i.e.~every variable is controlled by some agent;
\item $\mathit{Vars}|_i \cap \mathit{Vars}|_j = \emptyset$ for $i \not =
j \in \mathit{Ag}$, i.e.~no variable is controlled by more than one
agent.
\end{enumerate}
\end{LONG}
\begin{SHORT}
A model is a tuple $\mathcal{M} = \langle \mathit{Ag}, \mathit{Vars},	\mathit{Vars}|_1,...,\mathit{Vars}|_n,$ $\theta\rangle$, where:
$\mathit{Ag} = \{1,...,n\}$ is a finite, non-empty set of agents;
$\mathit{Vars} = \{p,q,...\}$ is a finite, non-empty set of
propositional variables;
$\mathit{Vars}|_1,...,\mathit{Vars}|_n$ is a partition of $\mathit{Vars}$ among the members
of $\mathit{Ag}$, with the intended interpretation that $\mathit{Vars}|_i$ is the
subset of $\mathit{Vars}$ representing those variables under the control
of agent $i\in \mathit{Ag}$;
$\theta: \mathit{Vars} \rightarrow \{\top,\bot\}$ is a
propositional valuation function that determines the truth value of each
propositional variable.

Since $\mathit{Vars}|_1,...,\mathit{Vars}|_n$ is a partition of
$\mathit{Vars}$, we have:
$\mathit{Vars} = \mathit{Vars}|_1 \cup ...\cup \mathit{Vars}|_n$
i.e.~every variable is controlled by some agent; and
$\mathit{Vars}|_i \cap \mathit{Vars}|_j = \emptyset$ for $i \not =
j \in \mathit{Ag}$, i.e.~no variable is controlled by more than one
agent.
\end{SHORT}%
We denote with $\mathit{Vars}|_{C}$ the variables controlled by the
agents that are part of the coalition $C \subseteq \mathit{Ag}$.
Given a model $\mathcal{M} = \langle Ag, Vars,
\mathit{Vars}|_1,...,\mathit{Vars}|_n,\theta\rangle$ and a coalition
$C$, a $C$-valuation function is
$\theta_{C}:\mathit{Vars}|_{C}\rightarrow \{\top, \bot\}$.
Valuations $\theta$ extend from variables to formulas in the usual way
and for a model $\mathcal{M} = \langle Ag, Vars,
\mathit{Vars}|_1,...,\mathit{Vars}|_n,\theta\rangle$
we write $\mathcal{M} \models \phi$ if $\theta(\phi) = \top$. We write
$\models \phi$ if $\mathcal{M} \models \phi$ for all $\mathcal{M}$.

\subsection{Secure and insecure systems}

All the semantic notions introduced above actually depend on the current
time, and we will thus decorate them with a superscript $\cdot^{S_t}$
denoting the system state at time $t$, e.g.~$\theta^{S_t}$ and
$\models^{S_t}$. Time is discrete and natural, and is defined with a non
empty set of time points $T$ and a transitive and irreflexive relation
$\prec$ such that $t\prec u$ means that $t$ comes before $u$ for $t, u
\in T$. In our case, since $t, t+1 \in T$ it follows naturally that
$t\prec t+1$.


The passing of time is regulated by a general \emph{clock}, which
ensures that the system can execute a definite number of actions in an
instant of time: at every clock of time, the system changes its state,
which is thus defined by the actions that the system executes. Even if
there are no actions to execute, the system changes its state from $S_t$
to $S_{t+1}$, which in this case are equal.

We assume that each system $\mathcal{S}$ starts, at time $t_0$, from a
secure state $S_0$, i.e.~a state in which all the critical formulas of
$\Phi$ are false, so that none of the features of the system is
violated. In general:
\begin{center}
$\mathcal{S}$ is \emph{secure at a state} $S_t$ iff $\not\models^{S_t}
\phi$ for all $\phi\in\Phi$
\end{center}	
\begin{SHORT}
and $\mathcal{S}$ is \emph{secure} iff it is secure at all $S_t$.
\end{SHORT}
\begin{LONG}
and
\begin{center}
$\mathcal{S}$ is \emph{secure} iff $\not\models^{S_t}
\phi$ for all $\phi\in\Phi$ and all $S_t$
\end{center}	
\end{LONG}

At time $t$, the system is in state $S_t$ and goes to state $S_{t+1}$
and executes all the actions of the agents that want to change the value
of their variables. Denoting with $\Gamma_{t+1}$ the set of actions that
the agents want to execute at the time instant $t$, we can write
$$
S_t \stackrel{\Gamma_{t+1}}{\longmapsto} S_{t+1}\,.
$$
and the aim of our approach is to guarantee that each reachable state
$S_{t+1}$ is secure, where the differences between $S_t$ and $S_{t+1}$ are
in their respective $\Theta$.


Since a coalition can change the value of the variables it controls, it
can attempt to change the value of a critical formula to true; formally,
for a coalition $C$ and a formula $\phi$ if $\Diamond_{C}\phi$ is true
then it means that $C$ can make $\phi$ true and thus the system
insecure, which we can write by negating the above definition or
alternatively, and basically equivalently, as:
\begin{center}
$\mathcal{S}$ is \emph{insecure} at a state $S_t$ iff $\models^{S_t}
\Diamond_{C}\phi$ for some $C \subseteq Ag$ and some $\phi\in\Phi$ 
\end{center}




To help the control of the system (but without loss of generality), we
can create a \emph{filter} for the actions that imposes a limit on the
number of the actions that can be executed in an instant of time. This
can decrease the performance of the system, so we need a trade-off
between control and performance.

\section{A deterministic blocking method}
\label{method}


\noindent Our aim is to introduce a method that guarantees the security
of the system, which amounts to blocking the actions of hidden
coalitions. Indeed, in the case of ``normal'' coalitions, the property
$\Diamond_{C}\phi$ allows us to list the actions of the agents in $C$,
while if the coalition is hidden then we cannot block any action as we
cannot directly identify the participants of a coalition we do not even
know to exist. Since the actions of participants of hidden coalitions
are not predictable, we cannot oppose these coalitions using $\Diamond$,
so we introduce a method that disregards the existence of this property.

Our (main) method for the protection of the system is a \emph{blocking
method} based on the \emph{greedy} Algorithm~\ref{blocking}: the agents
make a request to the system for the actions $\Gamma_{t+1}$ they wish to
execute at time $t$, and the system then simulates (via a method
$\textit{Simulate}$ we assume to exist) the actions in order to control
whether the system after the execution of the actions is still secure or
not. The simulation says if the system can proceed with the execution of
the actions or not, in which case it is given a list of the formulas
$\Phi'$ that became true along with the set of agents $\mathcal{A}'$
that made them become true.



\begin{figure}[t]
\begin{myalgorithm}{A greedy, deterministic blocking method}
\label{blocking}
\begin{algorithmic}[1]
\STATE  $\textit{Simulate}(\Gamma_{t+1}) = [\Phi' \mathcal{A}']$;
\WHILE{($\Phi'\neq \emptyset$)}
	\STATE Create the matrix with $\Phi'$ and $\mathcal{A}';$
	\STATE $\forall a_i\in \mathcal{A}':a_i\rightarrow c_i,  c_i=count(\phi_i);$
	\STATE $Quicksort(c_1,..c_k)=(c_x,...);$
	\STATE $\mathcal{B}=\mathcal{B}\cup a_x;$ \COMMENT{where $a_x$ is the agent associated to $c_x$, that is the maximum counter of the marked cells}
	\STATE $\textit{Simulate}(\Gamma_{t+1} \backslash \Gamma|_{a_x})
 = [\Phi'$ $\mathcal{A}']$;
\ENDWHILE
\end{algorithmic}
\end{myalgorithm}
\end{figure}
 

If the simulation says that the system can go in an insecure state, the
blocking method constructs a matrix: in every column of the matrix there
is one of the agents given by the simulation and in every row there is
one of the formulas that became true during the simulation. We mark each
cell that has as coordinates the agent that has variables in that
formula, and then we eliminate the column that has more marked
cells.\footnote{It would be more efficient to consider only
the variables of the formulas that become true, but if we take only
these variables, we cannot prevent long-term strategies of hidden
coalitions, consisting in the progressive reduction of the number of
steps needed for making a security formula true. An optimization of the
choice of variables to be considered in order to reduce the
effectiveness of such long-term strategies will be
subject of future work.}
The corresponding agent is not eliminated, rather he is just blocked and
his actions are not executed (by subtracting $\Gamma|_{a_x}$): the
``dangerous'' agents found in this way are put in a set $\mathcal{B}$ of
blocked agents. The simulation is called again and so on, until the
output of the simulation is an empty set of formulas, which means that
by executing the remaining actions the system does not go in an insecure
state. It is important to note that this method does not prevent the
creation of hidden coalitions but can guarantee the system security from
the attacks made by these coalitions.

The most important property of Algorithm~\ref{blocking} is that it never
brings the system in an insecure state, as it blocks the actions of agents
that can make the system insecure. We do not commit to a specific way that
the blocking is actually done, as it depends on the particular observed
systems and on the particular goals. For instance:
\begin{itemize}
\item Block the agent from changing the value of his variables until a
precise instant of time. During this period, his variables are left
unchanged or are controlled by the superuser/system administrator.
\item Block the agent for an interval of time, which
can be a default value or can be chosen in a random way, e.g.~so that a
hidden coalition doesn't know when the agent can be active and thus
cannot organize another attack.
\item Block the agent and remove his actions for that instant of time.
At the next instant, the agent has the possibility to ask for his
actions to be executed.
\item Leave the variables unchanged, 
without making known to the agent if the value of the variables has been
changed or not. This method can be improved by blocking the agent if he
attempts to change the truth value of those variables again.
\end{itemize}
Other, more complex, blocking strategies can of course be given, e.g.~by combining some the above. 

Note also that, depending on the system considered, it could be that not
all the requests for execution can be satisfied: \emph{the maximum
number $n$ of actions that can be executed in an instant of time} can be
chosen in different ways, with respect to the characteristics of the
system. Here, we choose $n$ to be the cardinality $|\Phi|$ of the
critical formulas. The order used for taking these actions and executing
them respects a FIFO queue, so the first $n$ actions are executed.

\begin{example}\label{example1}
As a concrete example of the application of the blocking method, consider a system $\mathcal{S}$ defined by the critical formulas
\begin{eqnarray*}
\phi_1 & = & v_1 \wedge v_2 \wedge (\neg v_3 \vee v_5 \vee \neg v_4)
\\
\phi_2 & = & (\neg v_5 \vee \neg v_3) \wedge \neg v_6
\\
\phi_3 & = & v_7 \wedge (\neg v_8 \vee \neg v_6) \\
\phi_4 & = & (v_8 \vee v_5 \vee \neg v_9) \wedge v_2 \wedge v_1
\end{eqnarray*}
so that number of the action to be executed in an instant of time is
$n=4$ (the cardinality of the set of critical formulas that define the
system), and let $\mathit{Ag} = \{a_1,a_2, a_3, a_4, a_5\}$ and $At =
\{v_1,...,v_9 \}$. Further, consider the following distribution of the
variables to the agents:
\begin{SHORT}
$$
\begin{array}{lll}
a_1 = \{v_1, v_7, v_8\} 
& \quad
a_2 = \{v_3\}
& \quad
a_3 = \{v_2, v_6\} \\
a_4 = \{v_4, v_5\} 
& \quad
a_5 = \{v_9\}
\end{array}
$$
\end{SHORT}%
\begin{LONG}
\begin{eqnarray*}
a_1 & = & \{v_1, v_7, v_8\} \\
a_2 & = & \{v_3\} \\
a_3 & = & \{v_2, v_6\} \\
a_4 & = & \{v_4, v_5\} \\
a_5 & = & \{v_9\}
\end{eqnarray*}
\end{LONG}%
Let us assume that the state $S_t$ at time $t$ is
\begin{displaymath}
\begin{array}{l}
\theta^{S_t}(v_1) = \theta^{S_t}(v_5) = \bot \\
\theta^{S_t}(v_2) = \theta^{S_t}(v_3) = \theta^{S_t}(v_4) = \theta^{S_t}(v_6) = \theta^{S_t}(v_7) \\
\phantom{\theta^{S_t}(v_2) } = \theta^{S_t}(v_8) = \theta^{S_t}(v_9)  =  \top
\end{array}
\end{displaymath}
and that we have the following actions $\Gamma_{t+1}$ to be executed
at time $t$ in the FIFO queue:
\begin{SHORT}
$$
\begin{array}{lll}
\theta^{S_{t+1}}(v_1) \mapsfrom \top\,, & \quad
\theta^{S_{t+1}}(v_3) \mapsfrom \bot\,, \\
\theta^{S_{t+1}}(v_4) \mapsfrom \bot\,, & \quad
\theta^{S_{t+1}}(v_6) \mapsfrom \bot\,, & \quad
\cdots
\end{array}
$$
\end{SHORT}%
\begin{LONG}
\begin{eqnarray*}
\theta^{S_{t+1}}(v_1) & \mapsfrom & \top\,, \\
\theta^{S_{t+1}}(v_3) & \mapsfrom & \bot\,, \\
\theta^{S_{t+1}}(v_4) & \mapsfrom & \bot\,, \\
\theta^{S_{t+1}}(v_6) & \mapsfrom & \bot\,, \\
\cdots
\end{eqnarray*}
\end{LONG}%
That is, $v_1$ should be set to $\top$ at state $S_{t+1}$, and so on.
%
The algorithm simulates the first $n=4$ actions, so that $\Phi' =
\{\phi_1, \phi_2, \phi_3,\phi_4\}$ and $\mathcal{A}' =~\{a_1, a_2, a_3,
a_4 \}$, and the matrix of Table~\ref{table:tab1} is constructed, which the algorithm sorts by the highest counter to produce the matrix in
Table~\ref{table:tab2}. $a_3$ is thus put into $\mathcal{B}$. The
simulation takes place again, taking into account that we have blocked
the value of the variables controlled by $a_3$ at the truth value of the
instant of time $t$. The simulation gives as result the set $\Phi' =
\{\phi_1, \phi_4\}$ e $\mathcal{A}' = \{\phi_1, \phi_2, \phi_4\}$.
The matrix of Table~\ref{table:tab3} is created, which is already ordered (so the sorting will return the same matrix).
So, we put in $\mathcal{B}$ the agent $a_1$, block its actions and make
the simulation with the remaining actions. This simulation gives
$\Phi'=\emptyset$, and thus the remaining actions can be executed
without any risk for the system $\mathcal{S}$.
\end{example}

\begin{table}[t]
\caption{Matrix constructed by the blocking algorithm for Example~\ref{example1}.}
\begin{center}\footnotesize
\begin{tabular}{p{0.5cm}||*{4}{c}}
                            & $a_1$ & $a_2$ & $a_3$ & $a_4$ \\
\hline
\hline
\bfseries $\phi_1$          & X & X & X & X \\
\hline
\bfseries $\phi_2$          &   & X & X & X \\
\hline
\bfseries $\phi_3$          & X &   & X &   \\
\hline
\bfseries $\phi_4$         & X &   & X & X \\
\end{tabular}
\end{center}
\label{table:tab1}
\end{table}

\begin{table}[t]
\caption{Matrix of Table~\ref{table:tab1} sorted in a decreasing order of counters.}
\begin{center}\footnotesize
\begin{tabular}{p{0.5cm}||*{4}{c}}
                            & $a_3$ & $a_1$ & $a_4$ & $a_2$ \\
\hline
\hline
\bfseries $\phi_1$          & X & X & X  & X\\
\hline
\bfseries $\phi_2$          & X &   & X  & X \\
\hline
\bfseries $\phi_3$          & X & X &   &   \\
\hline
\bfseries $\phi_4$         & X & X  & X &   \\
\end{tabular}
\end{center}
\label{table:tab2}
\end{table}

\begin{table}[t]
\caption{Matrix of Table~\ref{table:tab2} after the block of agent $a_3$.}
\begin{center}\footnotesize
\begin{tabular}{p{0.5cm}||*{3}{c}}
                            & $a_1$ & $a_2$ & $a_4$ \\
\hline
\hline
\bfseries $\phi_1$          & X & X  &   \\
\hline
\bfseries $\phi_4$         & X &   & X \\
\end{tabular}
\end{center}
\label{table:tab3}
\end{table}

\section{A non-deterministic blocking method}
\label{non-determ}

\noindent
As we will see in \secref{properties}, the above deterministic blocking
method based on a greedy algorithm is sound and complete. However,
this algorithm is not optimal as it cannot block the smallest set of
potentially dangerous agents. We now introduce a non-deterministic
method, which can be used for identifying optimal solutions. The method,
which is implemented in Algorithms~\ref{non_det} and~\ref{oracolo}, is
obtained by introducing an oracle (to determine the minimum set of
agents to block so to ensure the security of the system) within the
deterministic version, which makes the soundness and completeness
results directly applicable to the non-deterministic version as well.

%



The idea is that the result given by the simulation is passed to the
method \emph{ScanOracle}, which creates all the subsets of the given set
$\mathcal{A}'$ with cardinality $|\mathcal{A}'-1|$ and finds the subsets
with the maximum number of critical formulas that remain false, using
the simulation. The simulation of all the subsets is done in parallel;
the \emph{ScanOracle} is the non-deterministic part of our algorithm.
The result is passed to the main algorithm: if we find a subset of
agents such that when executing their actions all critical formulas are
false, then we have finished and we block the remaining agents that are
not part of this subset; if not all the critical formulas remain false
the result is passed recursively to \emph{ScanOracle} until it is given
a set of agents such that all the critical formulas stay false when
simulating their actions. The rest of the agents in $\mathcal{A}'$ that
are not part of the given subset are blocked. Using this method, we can
have different best solutions but we choose one in a random way, where
with ``best solutions'' we mean sets that have the same cardinality and
are the biggest sets that make the critical formulas stay false, so that
we block the smallest set of agents that make the critical formulas
true.

\begin{figure}[t]
\begin{myalgorithm}{A non-deterministic blocking method}
\label{non_det}
\begin{algorithmic}[1]
\STATE  $\textit{Simulate}(\Gamma_n) = [\Phi' \mathcal{A}']$;
\STATE $I = \mathcal{A}'; j=0;$ 
\WHILE{($\Phi'\neq\emptyset$ \& $I\neq\emptyset$ \& $j<|\mathcal{A}'|$)}
\STATE $I' = ScanOracle(I);$
\STATE For a random $I_i \in I'$
\IF {$ |\{\phi_i \mid \not\models \phi_i$ at the current state $\}|=|\Phi'|$}
\STATE $I=\emptyset;$
\ELSE
\STATE $I = I';$
\ENDIF
\STATE $j++$
\ENDWHILE
\STATE Choose a subset $I_i\in I'$ and put $\mathcal{A}'\backslash I_i$ in $\mathcal{B}$
\end{algorithmic}
\end{myalgorithm}


\begin{myalgorithm}{ScanOracle}
\label{oracolo}
\begin{algorithmic}[1]
\STATE Generate the subset of $I$ with cardinality $|I|-1$
\STATE Execute the simulation in parallel for each subset $I_i$, where $i\in\{1, ..., |I|\}$
\STATE Take the $I_i$ with the maximum number of $\{\phi_i \mid \not\models \phi_i\}$ and put them in $I'$
\STATE $I' = I' \cup I_i$
\STATE Eliminate the duplicates in $I'$
\STATE Return $I'$
\end{algorithmic}
\end{myalgorithm}
\end{figure}


\begin{example}
As a concrete example of the application of Algorithm~\ref{non_det} (and
of Algorithm~\ref{oracolo}), consider again the system of
Example~\ref{example1}, with the same data.
%
%
The simulation of the first $4$ actions yields again $\mathcal{A}'
=~\{a_1, a_2, a_3, a_4 \}$, which is passed to ScanOracle, which in turn
creates the subsets: 
\begin{LONG}
\begin{eqnarray*}
I_1 & = & \{a_1,a_2,a_3\} \\
I_2 & = & \{a_1,a_3,a_4\} \\
I_3 & = & \{a_1,a_2,a_4\} \\
I_4 & = & \{a_2,a_3,a_4\}
\end{eqnarray*}
\end{LONG}
\begin{SHORT}
$$	
\begin{array}{ll}
I_1 = \{a_1,a_2,a_3\} & \qquad
I_2 = \{a_1,a_3,a_4\} \\
I_3 = \{a_1,a_2,a_4\} & \qquad
I_4 = \{a_2,a_3,a_4\}
\end{array}
$$
\end{SHORT}%
The oracle takes these subsets and gives as results at the current state
\begin{displaymath}
\begin{array}{lllll}
I_1: & \models\phi_1\,, & \models\phi_2\,, & \models\phi_3\,, & \models\phi_4\,. \\
I_2: & \models\phi_1\,, & \models\phi_2\,, & \models\phi_3\,, & \models\phi_4\,. \\
I_3: & \models\phi_1\,, & \not\models\phi_2\,, & \not\models\phi_3\,, & \models\phi_4\,. \\
I_4: & \not\models\phi_1\,, & \models\phi_2\,, & \models\phi_3\,, & \not\models\phi_4\,.
\end{array}
\end{displaymath}
The two subsets with maximum number of false critical formulas are $I_3$
and $I_4$, so $I'=I_3\cup I_4$. Note that, since $I_3$ and $I_4$ have
the same number of false formulas, it is enough to test just one of them
to see if all the formulas are false or not; in this case, we have just
two formulas.
The ScanOracle is then called again with $I'=I_3\cup I_4$ and it yields
the subsets
\begin{LONG}
\begin{eqnarray*}
I_5 & = & \{a_1, a_2\} \\
I_6 & = & \{a_1, a_4\} \\ 
I_7 & = & \{a_2, a_4\} \\
I_8 & = & \{a_2, a_3\} \\
I_9 & = & \{a_2, a_4\} \\
I_{10} & = & \{a_3, a_4\}
\end{eqnarray*}
\end{LONG}
\begin{SHORT}
$$	
\begin{array}{lll}
I_5 = \{a_1, a_2\} & \quad
I_6 = \{a_1, a_4\} & \quad
I_7 = \{a_2, a_4\} \\
I_8 = \{a_2, a_3\} & \quad
I_9 = \{a_2, a_4\} & \quad
I_{10} = \{a_3, a_4\}
\end{array}
$$
\end{SHORT}%
and thus the following results 
\begin{displaymath}
\begin{array}{lllll}
I_5: & \models\phi_1\,, & \not\models\phi_2\,, & \not\models\phi_3\,, & \models\phi_4\,. \\
I_6: & \not\models\phi_1\,, & \not\models\phi_2\,, & \not\models\phi_3\,, & \models\phi_4\,. \\
I_7: & \not\models\phi_1\,, & \not\models\phi_2\,, & \not\models\phi_3\,, & \not\models\phi_4\,. \\
I_8: & \not\models\phi_1\,, & \models\phi_2\,, & \models\phi_3\,, & \not\models\phi_4\,. \\
I_9: & \not\models\phi_1\,, & \not\models\phi_2\,, & \not\models\phi_3\,, & \not\models\phi_4\,. \\
I_{10}: & \not\models\phi_1\,, & \models\phi_2\,, & \models\phi_3\,, & \not\models\phi_4\,.
\end{array}
\end{displaymath}
Then $I'=I_7\cup I_9 = I_7$ as these two subsets are identical. Using
$I'$, all the critical formulas are false, so it is the maximum subset
of agents with which the system is secure. Hence, we block the remaining
agents in $\mathcal{A}'$, which is the minimum set of agents for the
blocking of which the system remains secure:
\begin{LONG}
\begin{displaymath}
\{a_1,a_2,a_3,a_4\}\backslash\{a_2,a_4\}=\{a_1,a_3\}\,.
\end{displaymath}
\end{LONG}
\begin{SHORT}
$\{a_1,a_2,a_3,a_4\}\backslash\{a_2,a_4\}=\{a_1,a_3\}$.	
\end{SHORT}
\end{example}

\section{Computational cost, Soundness and Completeness}
\label{properties}

\begin{LONG}
\noindent	
In this section, we calculate the computational cost of both the
blocking methods we have given, and then show that the greedy
deterministic method is sound and complete (which implies the same for
the non-deterministic method).
\end{LONG}
\begin{SHORT}
\noindent In this section, we list some results for the deterministic and non-deterministic methods, which are proved in~\cite{CKV10-arxiv}.
\end{SHORT}
Recall that the maximum number $n$ of actions that can be executed
in an instant of time corresponds to the cardinality of the formulas in
$\Phi$. So, in the worst case, at each instant of time, there 
are $n$ different agents that want to change the value of $n$ different
variables.

\begin{SHORT}
\begin{theorem}
The computational cost of the greedy blocking method,
Algorithm~\ref{blocking}, and of the non-deterministic blocking method,
Algorithm~\ref{non_det}, is $\mathcal{O}(n^3)$.
\end{theorem}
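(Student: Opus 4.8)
The plan is to bound, for each algorithm \emph{separately}, the number of iterations of its \textbf{while} loop and the cost of one iteration, and then multiply. Throughout I would use that $n=|\Phi|$ is also the cap on the number of actions executed per time instant, so that in the worst case the set $\mathcal{A}'$ of ``dangerous'' agents and the set $\Phi'$ of formulas that turned true both have size at most $n$, and at most $n$ variables can have changed; I would also treat the length of each critical formula (hence $|\mathit{Vars}|$ itself) as $\mathcal{O}(n)$, which is the only structural assumption the bound needs. First I would pin down the cost of the subroutine $\textit{Simulate}$: it re-evaluates the $n$ formulas of $\Phi$ under the proposed valuation and returns those that became true together with the responsible agents; evaluating one formula is linear in its length, and the responsible agents are read off a variable-to-owner table computed once in $\mathcal{O}(|\mathit{Vars}|)$, so one call is $\mathcal{O}(n^2)$ (this crude bound is all that is required).

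For Algorithm~\ref{blocking}: each pass of the loop moves one \emph{fresh} agent $a_x$ into $\mathcal{B}$ and freezes its variables at their $S_t$ values, so $a_x$ cannot reappear in a later $\mathcal{A}'$ (a frozen agent changes nothing, hence does not ``make'' any formula true); since $|\mathcal{A}'|\le n$, the loop runs at most $n$ times. A single pass builds the $\Phi'\times\mathcal{A}'$ matrix — an $\mathcal{O}(n)\times\mathcal{O}(n)$ table whose marks are filled by scanning, for each formula, its $\mathcal{O}(n)$ variables: $\mathcal{O}(n^2)$ — computes the $k\le n$ column counters $c_i=count(\phi_i)$ in $\mathcal{O}(n^2)$, runs $Quicksort$ on them in $\mathcal{O}(n\log n)$, updates $\mathcal{B}$ in $\mathcal{O}(1)$, and calls $\textit{Simulate}$ once more in $\mathcal{O}(n^2)$. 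Thus a pass costs $\mathcal{O}(n^2)$ and the whole algorithm $n\cdot\mathcal{O}(n^2)=\mathcal{O}(n^3)$ (the initial $\textit{Simulate}$ call is absorbed).

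For Algorithm~\ref{non_det}: the loop guard contains $j<|\mathcal{A}'|$ with $j$ incremented every pass, so again there are at most $n$ passes. Inside a pass, $ScanOracle(I)$ (Algorithm~\ref{oracolo}) lists the $\le n$ subsets of $I$ of size $|I|-1$ in $\mathcal{O}(n^2)$, runs the simulation over them \emph{in parallel} so that this step is charged as a single simulation, i.e.~$\mathcal{O}(n^2)$, selects the subsets attaining the maximum number of false formulas (counting $\le n$ formulas for each of $\le n$ subsets: $\mathcal{O}(n^2)$), and forms $I'$ as their union with duplicates removed — since every $I_i\subseteq I$ we have $I'\subseteq I$, so $|I'|\le n$ and this is $\mathcal{O}(n^2)$. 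Back in Algorithm~\ref{non_det}, the test $|\{\phi_i\mid\not\models\phi_i\}|=|\Phi'|$ for a random $I_i\in I'$ costs at most another $\mathcal{O}(n^2)$, and the final assignment of $\mathcal{A}'\backslash I_i$ to $\mathcal{B}$ is $\mathcal{O}(n)$. Hence a pass is $\mathcal{O}(n^2)$ and the algorithm is $\mathcal{O}(n^3)$.

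The main obstacle I anticipate is not any individual step but keeping the accounting honest: one must argue, from the modelling conventions of \secref{system}, that $|\mathit{Vars}|$ and the size of each formula are genuinely $\mathcal{O}(n)$ (otherwise $\textit{Simulate}$ and the matrix construction carry an independent size parameter and the stated bound is about $n$ alone only after fixing that); that a blocked agent truly never re-enters $\mathcal{A}'$, so the greedy loop is bounded by $n$ rather than merely terminating; and — decisively for Algorithm~\ref{non_det} — that the parallel execution inside $ScanOracle$ is legitimately charged as one simulation, since a sequential charge would insert an extra factor of $n$ and push the cost to $\mathcal{O}(n^4)$.
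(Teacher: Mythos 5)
Your proposal is correct and follows essentially the same route as the paper's own proof: bound the number of while-loop iterations by $n$ (one agent blocked, resp.\ one decrement of $|I|$, per pass), bound each pass by $\mathcal{O}(n^2)$ via the cost of one simulation, the matrix/counter construction, the sort, and the subset generation in \emph{ScanOracle} (with the parallel simulations charged as a single call), and multiply. Your explicit flagging of the underlying assumptions --- formula sizes and $|\mathit{Vars}|$ being $\mathcal{O}(n)$, blocked agents never re-entering $\mathcal{A}'$, and the parallel charging convention --- is a welcome tightening of the accounting that the paper leaves implicit, but it does not change the argument.
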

\end{SHORT}
\begin{LONG}
\begin{theorem}
The computational cost of the greedy blocking method,
Algorithm~\ref{blocking}, is $\mathcal{O}(n^3)$.
\end{theorem}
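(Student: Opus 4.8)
The plan is to analyze Algorithm~\ref{blocking} line by line, bounding the cost of each operation in terms of $n = |\Phi|$, and then bounding the number of iterations of the \textbf{while} loop. Recall that in the worst case there are $n$ agents wishing to change $n$ variables, so $|\mathcal{A}'| \le n$ and $|\Phi'| \le n$ at every iteration. I would first argue that a single call to \textit{Simulate} costs $\mathcal{O}(n^2)$: it must evaluate each of the (at most $n$) critical formulas under the updated valuation, and each formula has size bounded by a constant times $n$ (it is built from at most $n$ variables), so evaluating all of them is $\mathcal{O}(n^2)$; collecting the pair $[\Phi'\;\mathcal{A}']$ adds nothing asymptotically.

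Next I would bound the body of the loop. Building the matrix with $\Phi'$ rows and $\mathcal{A}'$ columns and marking the cells (line~3) requires, for each formula--agent pair, checking whether that agent controls a variable occurring in that formula; this is $\mathcal{O}(n^2)$ pairs times $\mathcal{O}(n)$ work each, hence $\mathcal{O}(n^3)$ — but one can observe that scanning each formula once and distributing marks to the relevant agents makes this $\mathcal{O}(n^2)$. Computing the counters $c_i = \mathit{count}(\phi_i)$ for all agents (line~4) is $\mathcal{O}(n^2)$ (summing each column). The \textit{Quicksort} of the $k \le n$ counters (line~5) is $\mathcal{O}(n\log n)$, and selecting the maximum and adding $a_x$ to $\mathcal{B}$ (line~6) is $\mathcal{O}(n)$. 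Finally the recursive \textit{Simulate} call at line~7 is again $\mathcal{O}(n^2)$. So one iteration of the loop costs $\mathcal{O}(n^2)$.

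Then I would bound the number of iterations: each iteration adds exactly one agent to $\mathcal{B}$, and an agent is never added twice because its actions $\Gamma|_{a_x}$ are removed from the argument of \textit{Simulate} in all subsequent calls; since there are at most $n$ agents in $\mathcal{A}'$, the loop runs at most $n$ times. Multiplying the per-iteration cost $\mathcal{O}(n^2)$ by the iteration bound $n$, and adding the initial \textit{Simulate} call of $\mathcal{O}(n^2)$, yields the total $\mathcal{O}(n^3)$.

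\textbf{Main obstacle.} The delicate point is arguing the two bounds that are not purely mechanical: first, that the \textbf{while} loop terminates after at most $n$ iterations — this rests on the fact that each pass blocks one fresh agent and strictly shrinks the pool of active agents, so that in the worst case all $n$ of them get blocked and the subsequent simulation necessarily returns $\Phi' = \emptyset$ (with no active agents, no critical formula can be driven true from the secure state); and second, the precise accounting of the matrix-construction and counting steps, where a naive reading gives $\mathcal{O}(n^3)$ per iteration (hence $\mathcal{O}(n^4)$ overall) unless one is careful to note that each formula need only be scanned once per iteration. For the non-deterministic method the same bound must be re-examined, since \textit{ScanOracle} generates $|I|-1$ subsets but simulates them \emph{in parallel}, so the dominant serial cost per call remains one simulation, $\mathcal{O}(n^2)$, the recursion depth is at most $n$, and the total is again $\mathcal{O}(n^3)$.
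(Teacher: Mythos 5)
Your proposal is correct and follows essentially the same decomposition as the paper's own proof: a single simulation costs $\mathcal{O}(n^2)$, each iteration of the \textbf{while} loop costs $\mathcal{O}(n^2)$ (matrix construction, counter computation, sorting, and re-simulation), and the loop runs at most $n$ times since each pass blocks one fresh agent, giving $n^2 + n\cdot\mathcal{O}(n^2) = \mathcal{O}(n^3)$. Your additional care in justifying the $\mathcal{O}(n^2)$ bound for matrix construction and the termination bound is a welcome refinement of details the paper merely asserts, but it does not change the argument.
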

\begin{proof}
The simulation costs $n^2$ because if we have $n$ variables and all of
them are part of all the formulas, we need to do $n^2$ assignments. The
while cycle needs to be executed at worst $n$ times as it blocks an
agent per cycle and we may need to block all the $n$ agent. 
The cost of a single while cycle is $3n^2$ due to the sum of: the cost of the matrix creation, which is $n^2$, the cost of the association of the counter, which is $n^2$, and the cost of the quicksort algorithm, which we take to be $n^2$, instead of $n\log n$ (as the simulation has an $n^2$ complexity).
So, the total cost of the algorithm is $n^2 + (n\times (3n^2 +
n^2))=\mathcal{O}(n^3)$.
\end{proof}

\begin{theorem}
The computational cost of the non-deterministic blocking method,
Algorithm~\ref{non_det}, is $\mathcal{O}(n^3)$.
\end{theorem}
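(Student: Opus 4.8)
The plan is to mirror the cost analysis already carried out for the greedy method (Algorithm~\ref{blocking}) and show that the non-deterministic method stays within the same $\mathcal{O}(n^3)$ bound, this time measuring cost \emph{per non-deterministic branch} (so that the parallel simulations in \emph{ScanOracle} count only once). First I would fix the base costs: a single call to $\textit{Simulate}$ costs $n^2$, exactly as argued before, since in the worst case every one of the $n$ variables occurs in every one of the $n$ formulas, giving $n^2$ truth-value assignments. Then I would bound one iteration of the outer \texttt{while} loop in Algorithm~\ref{non_det}: it calls $ScanOracle$ once, picks a random $I_i$, and performs an $\mathcal{O}(n)$ count of the still-false critical formulas; so the per-iteration cost is dominated by $ScanOracle$.

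Next I would analyse $ScanOracle$ (Algorithm~\ref{oracolo}) along a single non-deterministic branch. Generating the subsets of $I$ of cardinality $|I|-1$ produces $|I|\le n$ subsets; running the simulation ``in parallel'' for each means that, per branch, only one simulation is charged, at cost $n^2$; selecting the $I_i$ with the maximum number of false formulas, forming the union, and removing duplicates are all $\mathcal{O}(n^2)$ bookkeeping operations on objects of size $\mathcal{O}(n)$. Hence one call to $ScanOracle$ costs $\mathcal{O}(n^2)$ per branch, and so does one iteration of the outer loop.

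Finally I would bound the number of outer-loop iterations. The guard $j<|\mathcal{A}'|$ with $|\mathcal{A}'|\le n$ caps the loop at $n$ iterations; independently, each successful iteration strictly decreases the cardinality of the surviving agent sets by one (\emph{ScanOracle} always drops to cardinality $|I|-1$), which also forces termination within $n$ steps. Multiplying, the total cost along any non-deterministic branch is $n^2 + n\cdot\mathcal{O}(n^2) = \mathcal{O}(n^3)$, matching the bound for the greedy method. I would close by remarking that this is the cost of a single computation path: the non-determinism hides the exponential blow-up of simulating all $\binom{n}{|I|-1}$ subsets, which is precisely why the stated bound refers to the non-deterministic model of computation.

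The main obstacle I anticipate is justifying the accounting convention for the ``parallel'' simulations in line~2 of \emph{ScanOracle}: one must be explicit that cost is charged per non-deterministic branch (otherwise the true work is $\Theta(n)$ simulations per call, i.e.\ $\Theta(n^3)$ per \emph{ScanOracle} call and $\Theta(n^4)$ overall, or exponential if one insists on a deterministic simulation of the guessing). A secondary, more minor point is to argue cleanly that the recursion depth of the nested \emph{ScanOracle} calls is $\mathcal{O}(n)$ rather than something larger; this follows from the strict decrease of $|I|$ noted above, but it is worth stating rather than leaving implicit.
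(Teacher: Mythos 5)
Your proposal is correct and follows essentially the same decomposition as the paper's own proof: $n^2$ for a call to $\textit{Simulate}$, at most $n$ iterations of the outer loop, $\mathcal{O}(n^2)$ per call to $\textit{ScanOracle}$, giving $n^2 + n\cdot\mathcal{O}(n^2) = \mathcal{O}(n^3)$. The only difference is that you are more explicit than the paper about the accounting convention for the ``parallel'' simulations in $\textit{ScanOracle}$ (the paper simply asserts that the oracle step is linear), which is a welcome clarification rather than a departure.
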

\begin{proof}
Also this algorithm uses the simulation, which costs $n^2$. In the worst
case, we need to call the $\textit{ScanOracle}$
(Algorithm~\ref{oracolo}) $n$-times, where the generation of the subsets
costs $n^2$, the oracle (that performs the simulations in parallel) is
linear and the elimination of the duplicates is $n^2$, so the cost of
the $\textit{ScanOracle}$ algorithm is $\mathcal{O}(n^2)$. The cost of
putting the agents in $\mathcal{B}$ is a constant. So, the total cost of
the algorithm is $n^2 + n\times\mathcal{O}(n^2) + C = \mathcal{O}(n^3)$.
\end{proof}

The computational cost of the non-deterministic method implemented in
Algorithm~\ref{non_det} is the same of the deterministic algorithm. This
can be puzzling for the reader, who can be expecting a lower cost, since
the non-deterministic version is obtained from the deterministic version
by using an oracle. In particular, the reduction by an oracle can be
used to prove that a problem of polynomial complexity on deterministic
machines can be solved in logarithmic time on non-deterministic
machines. We made a different choice, for specific reasons. First of
all, we employ the simulation step, that is not incorporated into the
first part of the method, and thus the solution cannot be computed in a
time lower than polynomial, in a non-deterministic fashion, even if we
use, as we did, the oracle call. As a consequence of this choice, we
could define, by using the same structure of the non-deterministic
algorithm, a variant of the algorithm in which the solutions of the
oracle are compared to each other, to choose the optimal one.
\end{LONG}

We say that a blocking method, and thus the corresponding algorithm, is
\emph{sound} if it does not allow a system to go in an insecure state
when it starts from a secure state $S_0$.
\begin{SHORT}
It is not difficult to prove that:
\end{SHORT}
\begin{theorem}
The greedy blocking method,	Algorithm~\ref{blocking}, is sound.
\end{theorem}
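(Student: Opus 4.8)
The plan is to argue by induction on the time step $t$, showing that if the system is secure at state $S_t$ then the state $S_{t+1}$ produced after running Algorithm~\ref{blocking} on the requested actions $\Gamma_{t+1}$ is again secure. Since the system starts from a secure state $S_0$ by assumption, soundness follows. First I would fix an arbitrary secure state $S_t$ and the set $\Gamma_{t+1}$ of requested actions, and trace through what the algorithm actually does: it calls $\textit{Simulate}(\Gamma_{t+1})$, obtaining a pair $[\Phi'\ \mathcal{A}']$ where $\Phi'$ is the set of critical formulas that would become true and $\mathcal{A}'$ the agents responsible. The key observation is the loop invariant: after each iteration of the \textbf{while} loop, $\Phi'$ and $\mathcal{A}'$ are exactly the output of $\textit{Simulate}$ applied to the currently surviving actions (i.e.\ $\Gamma_{t+1}$ with the $\Gamma|_{a_x}$ of all agents placed in $\mathcal{B}$ so far removed). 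This invariant holds trivially at entry and is re-established by line~7, which re-runs $\textit{Simulate}$ after removing the newly blocked agent's actions.

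Next I would show the loop terminates and that, on exit, the system is secure at $S_{t+1}$. Termination: each iteration adds (at least) one agent to $\mathcal{B}$, and there are only finitely many agents $\mathit{Ag}$; moreover, once an agent controlling some variable in a ``dangerous'' formula is blocked, that agent cannot be selected again, so after at most $|\mathcal{A}'|\le |\mathit{Ag}|$ iterations the surviving action set involves no agent able to flip a critical formula to true. In the extreme case where every agent in $\mathcal{A}'$ is eventually blocked, all their requested changes are discarded, so the variables they control retain their values from $S_t$; since $S_t$ is secure and the remaining (non-dangerous) actions by construction do not make any $\phi\in\Phi$ true (they were not flagged by $\textit{Simulate}$), no critical formula is true at $S_{t+1}$. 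In general, the loop exits precisely when $\textit{Simulate}$ returns $\Phi'=\emptyset$, i.e.\ the surviving actions $\Gamma_{t+1}\setminus\bigcup_{a\in\mathcal{B}}\Gamma|_a$ leave every critical formula false; this is the definition of $S_{t+1}$ being secure, so $\not\models^{S_{t+1}}\phi$ for all $\phi\in\Phi$.

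I would then close the induction: by the definition given in the excerpt, $\mathcal{S}$ being secure at $S_0$ together with the inductive step ``$S_t$ secure $\Rightarrow$ $S_{t+1}$ secure'' yields that $\mathcal{S}$ is secure at every reachable state, which is exactly the statement that Algorithm~\ref{blocking} is sound. The main obstacle, and the point that needs the most care, is establishing the correctness of $\textit{Simulate}$ as used here — namely that $\Phi'=\emptyset$ really does certify that executing the surviving actions cannot bring the system into an insecure state. This hinges on the fact (guaranteed by the partition of $\mathit{Vars}$ among agents) that the only variables whose truth values change in the transition $S_t\mapsto S_{t+1}$ are those touched by surviving actions, so the post-state valuation is completely determined and $\textit{Simulate}$ inspects exactly that valuation; there is no hidden action outside $\Gamma_{t+1}$ that could invalidate the simulation's verdict. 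A secondary subtlety is that blocking an agent does not introduce \emph{new} true formulas: removing $\Gamma|_{a_x}$ only reverts some variables to their $S_t$ values, and since $S_t$ was secure, this monotone ``rollback'' cannot create a true critical formula that was false in the previous simulation round — but one should note the matrix/counter choice of which agent to block is irrelevant to soundness (it only affects optimality), so any valid choice of $a_x$ preserves the argument.
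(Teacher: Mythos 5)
Your proof is correct and rests on the same essential fact as the paper's: Algorithm~\ref{blocking} only releases the surviving actions for execution once \textit{Simulate} returns $\Phi'=\emptyset$, so the resulting state $S_{t+1}$ falsifies every critical formula whenever $S_t$ did. The paper establishes this in a single sentence by contradiction --- if $S_{t+1}$ were insecure, the algorithm would have permitted a transition that, by construction, it forbids --- and leaves everything else implicit. Your version is a direct induction on $t$ with a loop invariant, and it supplies two things the paper's argument silently assumes: (i) termination of the \textbf{while} loop (each iteration strictly grows $\mathcal{B}$, and in the worst case all requested actions are discarded, leaving the system at the secure $S_t$), without which the algorithm would never produce an $S_{t+1}$ and the soundness claim would be vacuous; and (ii) the observation that blocking an agent only rolls its variables back to their $S_t$ values and so cannot itself make a previously false critical formula true. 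Both points are correct and worth making explicit. The one caveat, which applies equally to the paper's proof, is that the correctness of the unspecified \textit{Simulate} procedure --- that $\Phi'=\emptyset$ genuinely certifies the post-state is secure --- must be taken as a hypothesis rather than something derivable from the text; you identify this honestly as the load-bearing assumption, which is the right place to locate the residual trust.
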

\begin{LONG}
\begin{proof}
For the sake of contradiction, assume that the greedy blocking method
brings a system in an insecure state. This means that the algorithm
allowed a $\Gamma_{t+1}$ to execute such that
$S_t\stackrel{\Gamma_{t+1}}{\longmapsto}S_{t+1}$ where $S_{t+1}$ is
insecure while $S_t$ is secure. But, by definition, our algorithm allows
only transitions that bring the system in a secure state, thanks to the
simulation.
\end{proof}
\end{LONG}

Let us now define the notion of a \emph{state graph}. Recall that every
state of the system is defined by an assignment of truth values to the
variables, and a state is secure if it falsifies all security-related
formulas. As a very simple example, in Figure~\ref{grafo1} we give the
state graph of a system with two variables $\{A, B\}$.

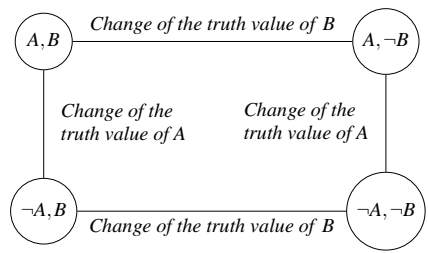
\begin{figure}[t]
\centering
\scalebox{0.75}{
\begin{tikzpicture}[scale=1.5, auto,swap]
    \node[node style sp,black] (AB) at (-2,2) {\mbox{$A,B$}};
    \node[node style sp,black] (ANB) at (2,2) {\mbox{$A,\neg{B}$}};
    \node[node style sp,black] (NAB) at (-2,0) {\mbox{$\neg{A},B$}};
	\node[node style sp,black] (NANB) at (2,0) {\mbox{$\neg{A},\neg{B}$}};

	\path (AB) edge node [above]  {\textit{Change of the truth value of} $B$} (ANB);
	\path (AB) edge node [right]  {$\begin{array}{l} \textit{Change of the} \\ \textit{truth value of } A\end{array}$} (NAB);
	\path (ANB) edge node [left]  {$\begin{array}{l} \textit{Change of the} \\ \textit{truth value of } A\end{array}$} (NANB);
	\path (NAB) edge node {\textit{Change of the truth value of} $B$} (NANB);
\end{tikzpicture}
}
\caption{A state graph for a system with two variables.}\label{grafo1}
\end{figure}

In general, to denote the transitions executed by a system, we build a
state graph as follows: every state is represented as a vertex of the
graph, and every pair of vertices is connected by an edge when and only
when the two edges differ by the truth value of one single variable,
where the edge is labeled by the name of the agent that controls that
variable. The resulting graph is \emph{indirected}. In
Figure~\ref{general}, we give an example of such a graph, where we omit
to specify all the values of the variables for readability, but instead
denote with gray vertices the insecure states and with white vertices
the secure ones.

\begin{figure}[t]
\centering
\scalebox{0.75}{
\begin{tikzpicture}[scale=0.9, auto,swap]
	\node[node style sp,black] (1) at (-2.5,1) {\mbox{$S_0$}};
        \node[node style sp,black] (2) at (-1,3) {\mbox{$S_{1,1}$}};
	\node[node style sp,black, fill=black!25] (6) at (1,0) {\mbox{$\cdots$}};
	\node[node style sp,black, fill=black!25] (3) at (-1,-1) {\mbox{$S_{1,n}$}};

	\node[node style sp,black] (4) at (1,4) {\mbox{$\cdots$}};
	\node[node style sp,black] (5) at (1,2) {\mbox{$\cdots$}};

	\node[node style sp,black] (8) at (3,3) {\mbox{$\cdots$}};
	\node[node style sp,black, fill=black!25] (b) at (2.5,1) {\mbox{$\cdots$}};
	\node[node style sp,black] (a) at (-1,1) {\mbox{$\cdots$}};

	\node[node style sp,black, fill=black!25] (7) at (1,-2) {\mbox{$\cdots$}};
	\node[node style sp,black, fill=black!25] (10) at (4,1) {\mbox{$S_{j,k}$}};
	\node[node style sp,black, fill=black!25] (9) at (3,-1) {\mbox{$\cdots$}};

	\path (1) edge node [above]  {} (2);
	\path (1) edge node [above]  {} (3);
	\path (2) edge node [above]  {} (4);
	\path (2) edge node [above]  {} (5);
	\path (3) edge node [above]  {} (6);
	\path (3) edge node [above]  {} (7);
	\path (3) edge node [above]  {} (5);
	\path (7) edge node [above]  {} (9);
	\path (6) edge node [above]  {} (9);
	\path (4) edge node [above]  {} (8);
	\path (5) edge node [above]  {} (8);
	\path (8) edge node [above]  {} (10);
	\path (9) edge node [above]  {} (10);
	\path (1) edge node [above]  {} (a);
	\path (10) edge node [above]  {} (b);
	\path (b) edge node [above]  {} (5);
	\path (a) edge node [above]  {} (6);
	\path (b) edge node [above]  {} (6);
	\path (a) edge node [above]  {} (4);
	\path (a) edge node [above]  {} (5);
	\path (b) edge node [above]  {} (4);
	\path (5) edge node [above]  {} (9);
\end{tikzpicture}
}
\caption{A state graph with secure and insecure states.}\label{general}
\end{figure}
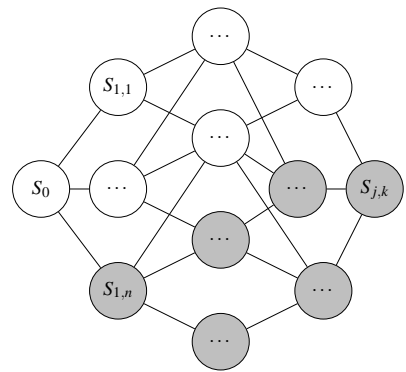

We say that a blocking method, and thus the corresponding algorithm, is
\emph{complete} if every secure state can be reached from any secure
state.
To prove the completeness of the greedy blocking method, we pursue the
following strategy:
\begin{enumerate}
\item First we prove that the state graph of the system is connected.
\item We prove that the subgraph formed by the vertices representing
secure states is connected when the security formulas can be written as a
set of Horn clauses.
\item We prove that every formula that we consider can be written as a
disjunction of Horn clauses.
\item We show that two secure states, whose security formulas can be
written as Horn clauses, are connected if and only there is a path of
secure states in the above mentioned subgraph that can be traversed by the
algorithm.
\item We show that the set of agents that have to be blocked, defined by
a rewriting into Horn clauses of a security formula, is the union of
agents that control variables occurring in one single Horn clause, and that can modify the value of the formula.
\item We show that the set of agents blocked by the greedy algorithm is
a superset of the set of agents that control variables occurring in one
single Horn clause in any rewriting of the formula.
\end{enumerate}

In particular, we write $\mathcal{H}(\Phi,\Phi')$ to denote the set of
the agents that control at least one variable of one Horn clause in one
rewriting $\Phi '$of the security formula $\Phi$, in such a way that by
changing the value of one of these variables the value of the security
formula can pass from $\bot$ to $\top$.

\begin{LONG}
Before we carry out this sequence of proof steps, let us observe a few
simple facts that will be useful in the following. 
\end{LONG}
\begin{SHORT}
Let us observe a few
simple facts that will be useful in the following. 
\end{SHORT}
First of all, every
secure state corresponds to a formula, obtained as the conjunction of
the literals representing the truth values of the variables in that
state. Since the single elements of the set of security formulas have to
be false for the system to be secure, we can describe this situation
directly by the set of secure states. Indeed, guaranteeing falseness of
each security formula corresponds to falsifying the disjunction of the
logical expressions representing the secure states.
\begin{SHORT}
By the definition of state graph, we immediately have:
\end{SHORT}
\begin{lemma}
\label{basiclemma}
The state graph is connected.
\end{lemma}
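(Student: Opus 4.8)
The plan is to show connectedness directly from the definition of the state graph. Recall that a state is an assignment $\theta: \mathit{Vars} \to \{\top,\bot\}$, and two states are joined by an edge precisely when they differ in the truth value of exactly one variable. So the state graph is nothing but the Boolean hypercube $\{0,1\}^{|\mathit{Vars}|}$ (with the standard Hamming-adjacency), and it is a well-known elementary fact that this graph is connected; the only work is to spell this out in the notation of the paper.

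First I would fix two arbitrary states, represented by valuations $\theta$ and $\theta'$. Let $D = \{v \in \mathit{Vars} \mid \theta(v) \neq \theta'(v)\}$ be the (finite) set of variables on which they disagree, and write $D = \{v_1,\dots,v_m\}$ in some order. I would then construct a finite sequence of states $\theta = \theta_0, \theta_1, \dots, \theta_m = \theta'$, where $\theta_k$ is obtained from $\theta_{k-1}$ by flipping the truth value of $v_k$ (i.e.\ $\theta_k$ agrees with $\theta'$ on $\{v_1,\dots,v_k\}$ and with $\theta$ elsewhere). By construction $\theta_{k-1}$ and $\theta_k$ differ in exactly the one variable $v_k$, hence there is an edge between the corresponding vertices in the state graph (labeled by the agent controlling $v_k$, which is well-defined since the $\mathit{Vars}|_i$ partition $\mathit{Vars}$). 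Concatenating these edges gives a path from the vertex of $\theta$ to the vertex of $\theta'$, so the graph is connected.

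There is essentially no obstacle here: $\mathit{Vars}$ is finite by hypothesis, so $D$ is finite and the construction terminates; and every single-variable flip is a legitimate action of exactly one agent, so each step is a genuine edge of the graph. The only thing to be a little careful about is the degenerate case $\theta = \theta'$ (then $D = \emptyset$ and the path is trivial), which is immediate. I would therefore keep the proof to a couple of sentences, noting that this lemma is really just the standard fact that a hypercube graph is connected, recast in the present setting, and that it provides the base case (step~1 of the completeness strategy) on top of which the more substantial Horn-clause arguments will be built.
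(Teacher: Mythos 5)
Your proof is correct and takes essentially the same route as the paper, which simply asserts that connectedness "follows straightforwardly by the definition of state graph"; your hypercube/Hamming-path argument is just the explicit version of that one-line observation. Nothing is missing, and spelling out the flip-one-variable-at-a-time path (including the degenerate case $\theta=\theta'$) is a perfectly reasonable way to make the paper's claim rigorous.
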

\begin{LONG}
\begin{proof}
This follows straightforwardly by the definition of state graph. 
\end{proof}
\end{LONG}

It would be tempting to presume that not only the set of states is
connected, but also the set of secure states. However, this is untrue.
Consider namely the case in which the system has two variables, $A$ and
$B$, so that there are four states as shown in Figure~\ref{grafo1}.
Suppose that the security formula is $(\neg{A} \wedge B) \vee (A
\wedge \neg{B})$. The set of secure states is formed by the
state in which both $A$ and $B$ are false and the state in which they
are both true. Clearly the set of secure states is then disconnected.

Conversely, if the set of secure states is connected, the security formula
can be written as a Horn clause (or a set of clauses, which is
equivalent). To do so, we introduce the notion of \emph{Horn rewriting}
of a formula: a propositional formula is a \emph{Horn clause} iff it can
be written in \emph{Conjunctive Normal Form} (i.e.~as a conjunction of
disjunctions of literals) in which every conjunct is formed by at most
one positive literal (This is the standard notion of Horn clause, which
we recall for preserving self-containedness.) It is well-know that every
propositional formula can be written as a disjunction of Horn clauses.

A \emph{Horn labeling} $\lambda$ of the states of a system is an
assignment of the system variables to one of the corresponding literals.
Whenever, in a Horn labeling, $\lambda(v)=v$ for a variable $v$, the
literal $v$ will be considered positive by that labeling, and
consequently literal $\neg{v}$ will be considered negative. If
$\lambda(v)=\neg{v}$, then the literal $v$ will be considered negative,
and consequently literal $\neg{v}$ will be considered positive. We
henceforth generalize the notion of Horn clause, by stating that a
formula is a Horn clause when there exists a Horn labeling for it that
makes it a Horn clause.

In the above example, the formula can be rewritten, by applying the
distributive property, as $(A \vee B) \wedge (\neg{A} \wedge\neg{B})$ and there exists a Horn labeling that makes the formula a Horn
clause: $\lambda(A)=\neg{A}$ and $\lambda(B)=B$.

\begin{LONG}
We can now prove a property that will be useful in the following.
\end{LONG}
\begin{lemma}
\label{hornlemma}
If the set of states that correspond to a security formula is
connected, then the security formula is a Horn clause.
\end{lemma}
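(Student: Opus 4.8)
The plan is to argue the contrapositive in a constructive way: assuming the security formula $\phi$ is \emph{not} a Horn clause under any Horn labeling, exhibit two secure states that are disconnected in the subgraph of secure states. First I would recall the combinatorial characterization of the obstruction. A clause over the variables of the system (with respect to a fixed Horn labeling $\lambda$) fails to be Horn exactly when, after orienting each variable so that $\lambda(v)$ is the ``positive'' literal, the clause contains at least two positive literals. Translating this from clauses to the CNF of $\phi$: $\phi$ is not a Horn clause under \emph{any} labeling iff for every labeling $\lambda$ some conjunct of the $\lambda$-oriented CNF of $\phi$ has two positive literals. The key step is to translate this syntactic fact into a statement about the set $\mathrm{Mod}(\neg\phi)$ of satisfying assignments of $\neg\phi$ — which is precisely the set of secure states.

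The bridge I would use is the classical semantic characterization of Horn theories: a set of assignments is the model set of a Horn formula if and only if it is closed under pointwise intersection (coordinatewise AND, after the orientation induced by $\lambda$). So I would argue: if the set of secure states is connected, I want to show it is closed under $\lambda$-intersection for a suitable $\lambda$, hence $\neg\phi$ is equivalent to a Horn formula, hence (taking negations appropriately, or equivalently by the generalized Horn-labeling notion the paper introduced) $\phi$ itself is a Horn clause in the paper's sense. The heart of the matter is thus: \emph{connectedness of the set of secure states in the state graph implies closure under pointwise meet}. Here connectedness gives me, for any two secure states $s,t$, a path $s = s_0, s_1, \dots, s_m = t$ of secure states where consecutive states differ in exactly one variable. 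Given two secure states $s,t$, I would try to produce their meet $s\wedge t$ by walking such a path and showing that each intermediate step stays secure and moves ``monotonically'' toward the meet — i.e.\ I would choose, among the neighbours on a shortest path, one that flips a coordinate on which $s$ and $t$ disagree in the direction of the meet, and iterate. Since $\phi$ contains every variable of the system (the paper assumes no formula is constant and every variable is controlled), one has to be a little careful that flips of ``irrelevant'' variables do not occur, but a shortest path between $s$ and $t$ only flips coordinates on which they differ, which is exactly what is needed.

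The main obstacle I anticipate is exactly this last inductive step: showing that along a shortest secure path one can always reach the meet without leaving the secure region, i.e.\ that the secure region, being connected, is in fact ``meet-closed.'' This is not automatic from connectedness alone for an arbitrary subset of the hypercube — it is special to the fact that the secure region is the complement of the model set of a single formula and to the hypercube's structure — so the argument must use that $\neg\phi$ is a fixed propositional formula and that the path consists of single-variable flips. I would handle it by induction on the Hamming distance $d(s,t)$: if $d(s,t)\le 1$ the meet lies on the path trivially; for the inductive step, among the (at least one) neighbours of $s$ on a geodesic to $t$, pick one, call it $s'$, obtained by flipping a coordinate where $s$ and $t$ differ toward the $\lambda$-meet value; argue $s'$ is secure because it lies on a geodesic inside the connected secure region (this is the delicate point, where one invokes that every geodesic between two secure states can be taken inside the secure subgraph — which is what connectedness of that subgraph, together with a minimality argument on path length, actually buys us); then apply the induction hypothesis to $s'$ and $t$ whose meet equals $s\wedge t$. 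Once meet-closure is established, invoke the semantic Horn characterization and the paper's generalized (labeling-based) notion of Horn clause to conclude that $\phi$ is a Horn clause, completing the contrapositive and hence Lemma~\ref{hornlemma}.
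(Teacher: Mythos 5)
Your route is genuinely different from the paper's (which argues syntactically, writing $\Phi$ as a disjunction of complete state descriptions, case-splitting on whether the number of secure states is below, equal to, or above $n+1$, and extracting single-literal conjuncts to define the labeling), and the idea of going through the classical semantic characterization of (renamable) Horn formulas as those whose model sets are closed under coordinatewise meet is a principled one. Unfortunately the step you yourself flag as delicate is simply false, and no refinement of the geodesic argument can repair it: a connected induced subgraph of the hypercube need not be closed under pointwise meet under \emph{any} Horn labeling. Take three variables and let the secure states be all assignments except $(0,0,0)$ and $(1,1,1)$; this is a $6$-cycle in $Q_3$, hence connected, yet $(1,0,0)\wedge(0,1,0)=(0,0,0)$ is excluded, and since every coordinate flip carries this set onto the complement of another antipodal pair, the same failure recurs under all $8$ labelings (one checks each of the $4$ antipodal pairs directly). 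Your appeal to the secure region being ``the complement of the model set of a single formula'' buys nothing, because every subset of $\{0,1\}^n$ is of that form.

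There is a second, independent gap at the end: even where meet-closure of the secure set did hold, concluding that $\neg\phi$ is (renamable) Horn does not yield that $\phi$ is, since Horn-ness is not preserved by negation, and the paper's generalized labeling notion does not make it so. The $6$-cycle example exhibits both problems at once: there $\phi\equiv(\neg A\vee B)\wedge(A\vee\neg B)\wedge(\neg B\vee C)\wedge(B\vee\neg C)$ is already Horn, so the lemma's conclusion holds, while $\neg\phi$ (whose models are the $6$-cycle) is not renamable Horn --- so your proof stalls precisely on an instance where the statement is true, because it is establishing Horn-ness of the wrong formula. A viable semantic argument would have to work with the insecure states (the models of $\phi$) directly, which is what the paper's second rewriting (negating the disjunction of insecure states) is gesturing at; be aware, though, that the paper's own case analysis is itself quite informal, so you should not assume there is a clean target argument to reverse-engineer here.
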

\begin{LONG}
\begin{proof}
As said above, the security formula $\Phi$ can be written as the
disjunction of the conjunctions of literals representing the
interpretation of letters\footnote{A letter is interpreted in classical
semantics. The truth value $\top$ is represented by the positive
literal, whilst the value $\bot$ is represented by the negative
literal.} occurring in the formula itself for all the secure states. We
henceforth say, when no confusion arises, that $\Phi$ is the disjunction
of secure states.

There are two methods to obtain an equivalent formula $\Phi '$ from
$\Phi$. We can use the distributive property or we can negate the
formula obtained by considering the insecure states.
\begin{itemize}
\item For the first approach, since the variables of the system all
occur in each single state, the distribution provides two types of
subformulas in $\Phi '$, that are conjuncts of the security formula:
complete conjuncts, obtained by the combination of all variables, and
incomplete conjuncts, obtained by the combination of a subset of the
variables occurring in the set of secure states. The incomplete conjuncts
can occur only when some variables only occur in one form as a literal
in the single disjuncts of the formula $\Phi$.
\item For the second method, consider the insecure states, which form the
complement set of the secure states in the system. Clearly, the security
formula can be written as the negation of the formula obtained as a
disjunction of the insecure states. Since the negation of a disjunctive
formula is a conjunctive formula, the obtained object is in conjunctive
normal form.
\end{itemize}
The new formula is a conjunction of disjunctions of literals, in which
every literal is the disjunction of the complemented literals of an
insecure state.

We prove that, regardless of the number of involved variables, there is
always a valid Horn labeling by means of a rewriting that is, depending
on the number of involved states, either based on the distributive
property or on the negation of insecure states. In fact, given the number
of states $s$ in the set of secure states for a given formula $\Phi$ and
the number $n$ of variables of the system, we can have $s < n+1$,
$s=n+1$ or $s > n +1$.
\begin{description}

\item[$(s < n+1)$] If the number of secure states is less than or equal to
the number of variables, since this set has to be connected, by
hypothesis, then the number of complete conjuncts cannot be greater than
the number of variables. Therefore, there is always at least one
incomplete conjunct in $\Phi '$ formed by one single literal $v$. If
this happens, every time that the value of that literal is
$\neg{v}$, the security formula is false, so we can rewrite the
formula as the conjunction of the single literal conjuncts, and
consider, as Horn labeling, the one obtained by associating to the
variables in the single literal conjuncts the negation of the literals
occurring in those conjuncts.

\item[$(s = n+1)$] If the number of disjuncts in $\Phi$, namely the
number of secure states, is exactly the number $n+1$, two situations can
occur, in principle. Either there are single literal conjuncts or there
are not. Actually, due to the connectedness of the state graph, the
former is impossible. If there are single literal conjuncts, then one
variable always occurs in the same literal form in all the disjuncts of
$\Phi$. Consider one state in the set of secure states. To obtain all the
other states in the set, we should be able to change one variable at a
time and generate in this way a new state. If the system contains $k$
elements, then the above described process generates $k-1$ new states.
But if there is one single literal conjunct in the rewriting $\Phi '$,
then at least one variable cannot change ever during this process.
Therefore, we can only generate $n-1$ new states contrary to the
hypothesis. When a set of states is formed in the way described above,
then every single state always contains the same number of positive and
negative literals (with respect to the trivial Horn labeling), apart
from one that contains one positive or one negative literal more than
the others. This special state can be used to generate the correct Horn
labeling for this set by inverting each literal in the $\lambda$
function (positive literals make the variables corresponding to negative
literals and vice versa).

\item[$(s > n+1)$] If the number of secure states is greater than $n+1$,
then we can consider the rewriting of $\Phi$ obtained by negating the
disjunction of insecure states, that we call $\Phi_U$. In other words,
$\Phi = \neg \Phi_U$. If we distribute the literals within $\Phi_U$
and obtain the conjunctive normal form of $\Phi_U$, then applying the
negation and again distributing the literals, we finally have a
conjunctive normal form for $\Phi$ that, based on the properties of the
graph, enjoys the same property of existence of single literal conjuncts
discussed for the case $s < n+1$. Therefore the claim is proved.
\end{description}
\end{proof}
\end{LONG}

Let us now consider a generic set of states that is not connected. As we
show in Figure~\ref{grafo1}, this may anyhow correspond to a valid Horn
labeling. This, however, does not occur for every security formula.
Conversely, every set of states can be written as the intersection of
connected sets of states. Therefore, given any security formula, we can
represent it as the disjunction of the Horn clauses that are obtained by
the sets of connected states.

The purpose of Algorithm~\ref{blocking} is to block the agents that
apply for changing variables so to make true a critical formula. Since a
critical formula can be made true by making true one of its disjuncts,
Lemma~\ref{hornlemma} can be used directly to prove the following
Lemma~\ref{ffprime}.

More specifically, the greedy blocking works by blocking agents when
they apply for the modification of the truth value of a variable, where
the blocking condition is: an agent cannot perform an action when this
performance brings the system in an insecure state. The synchronization
proposed by the algorithm is based on application time: the system
simulates the result of performing all (up to the maximum) actions that
agents applied for at that instant of time. The system denies the
execution to those agents that modify variables involved in the
transition of the system into an insecure state. Since this may
correspond to more than one combination, the resulting blocked agent set
may be larger than needed. We can assume, therefore, without loss of
generality, that the algorithm blocks all the agents that applied for
modifying variables that bring the system into an insecure state. This
assumption is sufficient to employ fruitfully the generalization of
Lemma \ref{hornlemma} to generic formulas. Remember that
$\mathcal{H}(\Phi,\Phi')$ denotes the set of the agents that control
variables of one Horn clause in the rewriting $\Phi '$of $\Phi$ and
bring the system into an insecure state.

\begin{lemma}\label{ffprime}
If no agent in $\mathcal{H}(\Phi,\Phi ')$ modifies variables occurring in $\Phi$, and $\Phi$ is false, then $\Phi$ is false after the modifications.
\end{lemma}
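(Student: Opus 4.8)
The plan is to show directly that the hypothesis forbids modifying any variable on whose value $\Phi$ actually depends; then the modifications cannot change the truth value of $\Phi$, and since $\Phi$ is false beforehand it stays false. First I would fix the rewriting $\Phi'$ and, using the discussion following Lemma~\ref{hornlemma}, write $\Phi \equiv \Phi' = \bigvee_i H_i$ as a disjunction of Horn clauses, the $H_i$ arising from the connected sets of secure states. The point of invoking Lemma~\ref{hornlemma} here is that it makes this Horn rewriting \emph{faithful}: every variable whose modification can flip $\Phi$ between false and true occurs in some $H_i$, so $\mathit{Vars}(\Phi')$ already contains all variables $\Phi$ genuinely depends on.

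Next I would unpack $\mathcal{H}(\Phi,\Phi')$. By definition it contains the controlling agent of every variable $v$ occurring in some Horn clause of $\Phi'$ such that changing $v$ can make $\Phi$ pass from $\bot$ to $\top$; and, since $\Phi$ is an arbitrary formula, whenever $\Phi$ depends on a variable $v \in \mathit{Vars}(\Phi')$ there is a state in which $\Phi$ is $\bot$ and flipping $v$ makes it $\top$. Hence $\mathcal{H}(\Phi,\Phi')$ contains the controllers of \emph{all} the relevant variables of $\Phi$. Contrapositively, an agent controlling some $v \in \mathit{Vars}(\Phi)$ but lying outside $\mathcal{H}(\Phi,\Phi')$ controls only an irrelevant variable, i.e.\ one on whose value $\Phi$ does not depend.

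The conclusion then follows. By hypothesis no agent of $\mathcal{H}(\Phi,\Phi')$ modifies a variable of $\Phi$, so the only variables of $\Phi$ touched by the modifications belong to agents outside $\mathcal{H}(\Phi,\Phi')$, hence are irrelevant by the previous step. Thus the modifications leave unchanged every variable $\Phi$ depends on, $\Phi$ keeps its truth value, and being false it stays false. Equivalently, by contradiction: if $\Phi$ became true, some $H_i$ would have gone from $\bot$ to $\top$, which forces a variable occurring in $H_i$ to have been modified, whose controlling agent is then in $\mathcal{H}(\Phi,\Phi')$ --- contradicting the hypothesis.

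The step I expect to be the main obstacle is the second one: making precise, from the rather informal definition of $\mathcal{H}(\Phi,\Phi')$, that ``controls a variable of a Horn clause of $\Phi'$ that can affect $\Phi$'' captures \emph{every} agent with a relevant variable, and in particular that a coordinated modification of several variables, by agents none of which lies in $\mathcal{H}(\Phi,\Phi')$, cannot flip $\Phi$ either (a single agent need not be able to flip $\Phi$ on its own). This is exactly where the Horn decomposition earns its keep --- it aligns ``occurs in a Horn clause of $\Phi'$'' with ``$\Phi$ depends on it'', so the forbidden set $\mathcal{H}(\Phi,\Phi')$ is large enough --- and one must also read ``can pass from $\bot$ to $\top$'' in the definition of $\mathcal{H}$ as quantifying over states rather than referring to the current one only.
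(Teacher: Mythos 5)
Your proposal is correct, and it is in fact considerably more explicit than what the paper offers: the paper's entire proof of Lemma~\ref{ffprime} is the single sentence that it is ``a direct consequence of the proof of Lemma~\ref{hornlemma}'', so there is no detailed argument to match yours against. Your central move --- every variable on which $\Phi$ semantically depends occurs in some Horn disjunct of $\Phi'$ and, in some state, flipping it takes $\Phi$ from $\bot$ to $\top$, so its (unique) controlling agent lies in $\mathcal{H}(\Phi,\Phi')$; hence the hypothesis freezes all relevant variables and $\Phi$'s truth value cannot change --- is sound, and your handling of simultaneous modifications by several agents (none individually able to flip $\Phi$) via semantic dependence rather than single flips is exactly the point that needs care; with the ``current-state, single-flip'' reading of $\mathcal{H}$ the lemma would be false (take $\Phi = v_1 \wedge v_2$ with both variables currently false), so your insistence on reading ``can pass from $\bot$ to $\top$'' as quantified over states is not pedantry but a necessary repair of the informal definition. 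One honest observation: under your reading, the Horn decomposition does essentially no work --- any equivalent rewriting of $\Phi$ would contain all variables $\Phi$ depends on, so your argument is really a general ``blocked set contains all controllers of relevant variables'' argument --- whereas the paper evidently intends the Horn structure and the connectedness analysis of Lemma~\ref{hornlemma} to be load-bearing (it is what the completeness theorem later leans on). That makes your proof arguably cleaner and more elementary than what the paper gestures at, at the price of not explaining why the lemma is stated in terms of Horn rewritings in the first place.
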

\begin{LONG}
\begin{proof}
This is a direct consequence of the proof of Lemma~\ref{hornlemma}. \
\end{proof}
\end{LONG}



Since the agents blocked by the algorithm are all those that bring the
system into an insecure state, then every agent controls variables that
certainly occur in at least one disjunct of $\Phi$. If we rewrite $\Phi$
as a disjunction of Horn clauses (following the standard notion of
formula rewriting into disjunction of Horn clauses)
\begin{LONG}
\begin{displaymath}
\Phi' = \Phi'_1 \vee \Phi'_2 \vee ... \vee \Phi'_k\,,
\end{displaymath}
\end{LONG}%
\begin{SHORT}
$\Phi' = \Phi'_1 \vee \Phi'_2 \vee ... \vee \Phi'_k$,
\end{SHORT}%
then, by definition of this rewriting, every variable controlled by $a$
that occurs in $\Phi$, occurs in at least one of these disjuncts. If an
agent that controls one variable is blocked by our algorithm, then, by
definition of the simulation, at least one of the conjuncts in which the
variable occurs in $\Phi '$ is true. This means that given any pair of
secure states $s$ and $s'$, the algorithm never blocks an agent that
brings the system directly from $s$ to $s'$. The extension of this
property to paths is proved in the following theorem.
\begin{theorem}
The greedy blocking method,	Algorithm~\ref{blocking}, is complete.
\end{theorem}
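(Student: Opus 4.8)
The plan is to lift the single-step property just established — that the greedy algorithm never blocks an agent effecting a direct transition between two secure states — to a statement about paths, and thereby conclude completeness in the sense defined (every secure state is reachable from any secure state). First I would invoke Lemma~\ref{basiclemma}: the full state graph is connected, so between any two secure states $s$ and $s'$ there exists at least one path $s = s_0, s_1, \ldots, s_m = s'$ in the state graph, each consecutive pair differing in exactly one variable. This path may, however, pass through insecure vertices, which the algorithm will not traverse, so the core work is to replace it by a path consisting solely of secure states.

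Next I would use the Horn-rewriting machinery (Lemma~\ref{hornlemma} and its generalization, together with Lemma~\ref{ffprime}): given a security formula $\Phi$, decompose it as a disjunction of Horn clauses $\Phi' = \Phi'_1 \vee \cdots \vee \Phi'_k$, each arising from a connected set of secure states. The set of secure states is then the intersection, over all $\phi \in \Phi$, of unions of these connected components, and within each connected component one can move from any secure state to any other by a sequence of single-variable flips that keep that component's clause false. The key point, already argued in the text preceding the theorem, is that the agents the greedy algorithm blocks form a \emph{superset} of $\mathcal{H}(\Phi,\Phi')$ for the relevant rewriting — i.e. exactly those agents whose controlled variables, if flipped, could turn a Horn clause from $\bot$ to $\top$. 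By Lemma~\ref{ffprime}, flips performed by agents \emph{outside} this set preserve falseness of $\Phi$; hence every edge the algorithm \emph{does} allow connects two secure states, and conversely, within a connected component of secure states there is always such an allowed edge available to make progress toward the target. Concatenating these allowed single steps across the (finitely many) components whose intersection must be traversed yields a path of secure states from $s$ to $s'$ that the algorithm permits end to end.

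I would finish by an induction on the length of this reconstructed secure path: the base case is the single-step property already in hand, and the inductive step appends one more algorithm-permitted transition between secure states, using Lemma~\ref{ffprime} to certify that the new vertex is still secure and that no agent responsible for the transition is blocked. Since $s$ and $s'$ were arbitrary secure states, this establishes that every secure state is reachable from any secure state, i.e. the method is complete.

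\textbf{Main obstacle.}
The delicate point will be justifying that a secure path \emph{always} exists between two secure states lying in the same intersection-of-components region, and that the greedy algorithm's (possibly over-cautious) blocking never disconnects this region by blocking \emph{too many} agents. The text finesses this with the ``without loss of generality'' assumption that the algorithm blocks exactly the agents modifying variables that bring the system insecure; making precise why this over-approximation still leaves enough freedom to move between any two secure states — rather than stranding the system at a secure vertex with all escape edges blocked — is the crux, and rests essentially on the connectedness guaranteed by Lemma~\ref{hornlemma} for each Horn component.
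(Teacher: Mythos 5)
Your overall route---reduce to a path of secure states, induct on its length, and invoke Lemma~\ref{ffprime} at each step---is the same as the paper's. But you have correctly identified, and then left open, precisely the step that carries all the weight: why the greedy method's over-approximate blocking cannot strand the system at a secure vertex with every outgoing secure edge forbidden. Appealing to ``the connectedness guaranteed by Lemma~\ref{hornlemma} for each Horn component'' does not by itself close this, because Lemma~\ref{hornlemma} is a statement about the structure of the set of secure states, not about which edges the algorithm permits; the secure region can be perfectly connected while an over-cautious blocker refuses to traverse any of its edges. Your plan's inductive step (``appends one more algorithm-permitted transition'') silently assumes the permitted transition exists, which is the thing to be proved.

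The paper closes this gap with a contradiction argument of a rather different flavour, resting on an ingredient your plan never uses: the fact that $\mathit{Vars}|_1,\dots,\mathit{Vars}|_n$ is a \emph{partition}, so no variable is controlled by two agents. Concretely: factor the length-$k$ path from $s$ to $s'$ as a length-$(k-1)$ path from $s$ to $s''$ plus the direct edge $s''\to s'$, and suppose the greedy method permits the last edge but blocks an agent needed earlier. Via Lemma~\ref{ffprime}, the paper argues the offending variable would then have to lie in a Horn disjunct of every rewriting of $\Phi$ and be controlled both by the agent $a$ blocked because it could drive $s''$ into an insecure state and by a distinct agent $a'$ performing the move out of $s$---i.e., one variable controlled by two agents, contradicting the model's exclusive-control assumption. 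To complete your proposal you would need to supply this (or an equivalent) argument; as written, the ``main obstacle'' you flag is a genuine gap, not a detail.
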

\begin{LONG}
\begin{proof}
Consider two secure states $s$ and $s'$, and one agent $a$ that controls
variables occurring in $\Phi$. Suppose that $s$ and $s'$ are connected
by a path of length $k$. If we can connect $s$ to $s'$ by a path of
length $k$, then we can connect $s$ to $s''$ by a path of length $k-1$
and directly $s''$ to $s'$. Suppose, by contradiction, that the greedy
method blocks too many agents, so that the system would be able to move
from $s''$ to $s'$ but not from $s$ to $s''$. By Lemma~\ref{ffprime}, if one agent that is not blocked by the algorithm does not apply for changing one variable that occurs in one of the Horn clauses that appear in a rewriting of $\Phi$, then the formula remains false. 
This would mean that at
least one agent, able to perform the changes of variable values that
would bring the system from $s$ to $s'$, controls at least one variable
that does not occur in each possible rewriting of $\Phi$. 
This leads to
the conclusion that at least one variable is controlled by one agent
$a$, blocked by the greedy method, since $a$ is able to bring the system
into an insecure state, from $s''$ to $s'$, the same variable is
controlled by another agent $a'$, that vice versa could move the system
to an insecure state and occurs in one Horn disjunct of one of the
possible rewritings of $\Phi$, and brings the system from $s$ to $s''$.
As a consequence, one variable would be controlled by two agents, which
is contrary to the definition of agent control of variables in the
system definition adopted here.
\end{proof}
\end{LONG}

\noindent
Soundness and completeness of the deterministic algorithm directly extend to the non-deterministic one.
\begin{LONG}
\begin{theorem}
\label{theoqui}
The non-deterministic blocking method, Algorithm~\ref{non_det}, is sound.
\end{theorem}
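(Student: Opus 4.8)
The plan is to mirror the soundness proof of the greedy method (Algorithm~\ref{blocking}) and argue by contradiction, exploiting the fact that Algorithm~\ref{non_det} never executes an action set that the simulation has not certified as safe. Concretely, I would proceed by induction on the time index $t$: the base case is the assumption that $S_0$ is secure, and for the inductive step I assume $S_t$ is secure and show that the state $S_{t+1}$ produced by the method after processing $\Gamma_{t+1}$ is again secure. Suppose, for contradiction, that $S_{t+1}$ is insecure, i.e.\ $\models^{S_{t+1}}\phi$ for some $\phi\in\Phi$.

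The key step is to pin down what Algorithm~\ref{non_det} actually executes. Its final instruction blocks $\mathcal{A}'\setminus I_i$ and thus lets only the agents in the subset $I_i\subseteq\mathcal{A}'$ carry out their part of $\Gamma_{t+1}$ (the agents outside $\mathcal{A}'$ never endanger any critical formula, by definition of the $\mathcal{A}'$ returned by \textit{Simulate}). Hence $S_{t+1}$ is exactly the state obtained by applying the actions of $I_i$ to $S_t$, so it suffices to show that the $I_i$ chosen on exit from the \textbf{while} loop is one for which the simulation has verified that all critical formulas stay false. This is immediate when the loop exits through the \textbf{if} branch that sets $I=\emptyset$, since that branch is taken precisely when the simulation of $I_i$ confirms that every critical formula is again false; in that case $S_{t+1}$ is secure, contradicting the hypothesis.

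The part that needs care is showing that the loop \emph{always} leaves through that branch, i.e.\ that it cannot terminate with $\Phi'$ still non-empty via the guard $j<|\mathcal{A}'|$ (or via $I\neq\emptyset$) failing first. For this I would use two facts: (i) each call to \textit{ScanOracle} strictly decreases the cardinality of $I$ by one, so after at most $|\mathcal{A}'|$ iterations the candidate subsets have shrunk down to the empty set; and (ii) the empty subset is always safe, because blocking every agent in $\mathcal{A}'$ means no variable is modified, so $S_{t+1}=S_t$, which is secure by the inductive hypothesis — hence for $I_i=\emptyset$ the simulation reports all of $\Phi$ false and the \textbf{if} branch fires. Consequently the loop is guaranteed to reach an $I_i$ certified by simulation to keep all of $\Phi$ false, which contradicts the assumption that $S_{t+1}$ is insecure and establishes soundness.

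Finally, I would note that, as in the deterministic case, the argument is essentially structural: Algorithm~\ref{non_det} differs from Algorithm~\ref{blocking} only in the rule used to select which agents remain unblocked, while in both the net effect is that the only actions executed are those of an agent set on which \textit{Simulate} reports $\Phi'=\emptyset$. I expect the main obstacle not to be the contradiction step itself but the bookkeeping on the loop's three exit conditions together with the observation that the ``block everyone'' fallback is always available and always secure; once that is in place the proof closes exactly as for the soundness of the greedy method.
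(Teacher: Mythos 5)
Your proposal is correct in substance, but it takes a noticeably different route from the paper. The paper's own proof is a two-sentence inheritance argument: Algorithm~\ref{non_det} is obtained from Algorithm~\ref{blocking} by inserting an oracle, hence ``every solution of Algorithm~\ref{blocking} is also a solution of Algorithm~\ref{non_det}'', and soundness is taken to carry over from the already-proved soundness of the greedy method. You instead prove soundness of Algorithm~\ref{non_det} directly, by induction on $t$ and contradiction, analysing the loop's three exit conditions, the role of the final ``choose $I_i\in I'$ and block $\mathcal{A}'\setminus I_i$'' step, and the availability of the always-safe fallback $I_i=\emptyset$. The trade-off is real: the paper's argument is clean but, taken literally, only shows that the non-deterministic method has \emph{at least} the safe outputs of the greedy one -- it does not by itself rule out that the oracle introduces additional, uncertified outputs, which is exactly the gap your direct analysis closes (the only actions ever executed are those of an agent set on which $\textit{Simulate}$ reports all critical formulas false). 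The price you pay is a dependence on fine details of the pseudocode that the paper never commits to: that each $\textit{ScanOracle}$ call strictly shrinks the candidate cardinality so the guard $j<|\mathcal{A}'|$ cannot fire before a certified subset is found, that agents outside $\mathcal{A}'$ are indeed harmless so that the empty subset is safe, and that the subset chosen on line~13 is one the simulation actually certified (the pseudocode's ``choose a subset $I_i\in I'$'' is a fresh, possibly different choice from the $I_i$ tested inside the loop -- worth a sentence to pin down). Your closing observation that both algorithms reduce to ``execute only simulation-certified action sets'' is precisely the paper's argument; the rest of your write-up is the verification the paper omits.
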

\begin{proof}
The non-deterministic blocking method is an extension of the greedy blocking method of Algorithm~\ref{blocking} by means of an oracle. This means that every solution of Algorithm~\ref{blocking} is also a solution of Algorithm~\ref{non_det}.
\end{proof}
\begin{theorem}
The non-deterministic blocking method, Algorithm~\ref{non_det}, is complete.
\end{theorem}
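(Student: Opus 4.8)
The plan is to reduce completeness of the non-deterministic method to the completeness of the greedy one, mirroring the way soundness was transferred in Theorem~\ref{theoqui}. The key observation is that the completeness argument for Algorithm~\ref{blocking} was not carried out for the particular agent selected by the greedy step, but, ``without loss of generality'', for the most restrictive policy: the one that blocks \emph{all} the agents of $\mathcal{A}'$, i.e.\ every agent that, by changing one of its variables, can drive the system into an insecure state. Write $\mathcal{B}_{\mathrm{all}}$ for this maximal blocked set. What the proof of completeness for Algorithm~\ref{blocking} establishes, via Lemma~\ref{ffprime} and the fact that each variable is controlled by exactly one agent, is that even under the $\mathcal{B}_{\mathrm{all}}$ policy no edge of the state graph joining two secure states is ever forbidden, and hence --- by the path-extension argument in that proof --- every secure state stays reachable from any secure state.

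First I would make precise that the blocked set $\mathcal{B}$ returned by Algorithm~\ref{non_det} satisfies $\mathcal{B}\subseteq\mathcal{B}_{\mathrm{all}}$ at the instant of time under consideration. Indeed, \emph{ScanOracle} only ever removes agents from the working set $I$, which is initialised to $\mathcal{A}'$, and the last line of Algorithm~\ref{non_det} places into $\mathcal{B}$ exactly $\mathcal{A}'\setminus I_i$ for some surviving subset $I_i\subseteq\mathcal{A}'$; hence $\mathcal{B}\subseteq\mathcal{A}'=\mathcal{B}_{\mathrm{all}}$. Since a blocking method that blocks a subset of the agents blocked by another one permits a superset of its transitions, any path of secure states traversable under the $\mathcal{B}_{\mathrm{all}}$ policy is \emph{a fortiori} traversable under the non-deterministic method; as that family of paths already connects any two secure states, completeness follows. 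Soundness (Theorem~\ref{theoqui}) then guarantees, in addition, that the extra transitions now permitted never leave the secure region, so nothing is lost.

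The step I expect to need the most care is checking that the \textbf{while} loop of Algorithm~\ref{non_det} together with the recursion through \emph{ScanOracle} genuinely terminates with a surviving subset $I_i$ on which \emph{all} the critical formulas that had become true are false again --- otherwise the paragraph above would be arguing about a vacuous run. Here I would use a descent on $|I|$: each call to \emph{ScanOracle} strictly decreases the cardinality of the working set, the guard $j<|\mathcal{A}'|$ caps the number of iterations, and in the extreme case $I$ shrinks to a set of agents controlling no variable that occurs in any $\phi\in\Phi$ which became true, at which point Lemma~\ref{ffprime} forces every critical formula to be false; so the loop does exit as required. Finally, since the method is non-deterministic and several optimal subsets of the same cardinality may be produced, I would note that both the containment $\mathcal{B}\subseteq\mathcal{B}_{\mathrm{all}}$ and the soundness argument hold for \emph{every} admissible choice of $I_i$, so completeness does not depend on which optimum the oracle returns. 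This concludes the argument.
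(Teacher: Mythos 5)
Your proof is correct and follows essentially the same route as the paper, which disposes of this theorem in one line by observing that Algorithm~\ref{non_det} extends Algorithm~\ref{blocking} by an oracle and invoking the same transfer used for soundness in Theorem~\ref{theoqui}. Your version supplies the justification the paper leaves implicit: the completeness proof of Algorithm~\ref{blocking} is carried out (``without loss of generality'') for the maximal blocked set $\mathcal{A}'$, Algorithm~\ref{non_det} always blocks a subset $\mathcal{A}'\setminus I_i\subseteq\mathcal{A}'$, and blocking fewer agents can only enlarge the set of permitted secure-to-secure transitions, with the simulation check guaranteeing termination on a subset that keeps every critical formula false.
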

\begin{proof}
The same reasoning used to prove Theorem~\ref{theoqui} applies for completeness.
\end{proof}
\end{LONG}
\begin{SHORT}
\begin{theorem}
The non-deterministic blocking method, Algorithm~\ref{non_det}, is sound and complete.
\end{theorem}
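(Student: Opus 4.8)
The plan is to derive both properties by reducing to the already-established soundness and completeness of the greedy method, Algorithm~\ref{blocking}, exploiting the fact that Algorithm~\ref{non_det} differs from it only by the insertion of the oracle \emph{ScanOracle}, which selects a \emph{largest} subset $I_i\subseteq\mathcal{A}'$ whose actions keep every critical formula false (equivalently, a \emph{smallest} set $\mathcal{A}'\setminus I_i$ of agents to block).

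First I would treat soundness. By definition it suffices to show that whenever $S_t\stackrel{\Gamma_{t+1}}{\longmapsto}S_{t+1}$ is permitted by Algorithm~\ref{non_det} and $S_t$ is secure, then $S_{t+1}$ is secure as well. The \textbf{while} loop of Algorithm~\ref{non_det} exits only when either $\Phi'=\emptyset$ after the last simulation or a subset $I_i$ has been found with $|\{\phi_i\mid\not\models\phi_i\}|=|\Phi'|$; in the final line the agents in $\mathcal{A}'\setminus I_i$ are blocked and only the actions of $I_i$ are executed. Since \emph{Simulate} has certified that, executing exactly those actions, all critical formulas stay false, the reached state $S_{t+1}$ is secure. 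This is the same simulation-gated reasoning used for Algorithm~\ref{blocking}, now applied to the surviving action set $\Gamma|_{I_i}$, so soundness follows; equivalently, every permitted transition of Algorithm~\ref{non_det} is a permitted transition of Algorithm~\ref{blocking} on a suitable surviving set, and the latter is sound.

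For completeness I would show that Algorithm~\ref{non_det}, like Algorithm~\ref{blocking}, never blocks an agent that is required to move between two secure states. The oracle \emph{ScanOracle} returns a surviving set $I_i$ of maximum cardinality among those whose actions keep all critical formulas false, so the blocked set $\mathcal{A}'\setminus I_i$ is of \emph{minimum} cardinality among sets of agents whose blocking restores security; in particular it is no larger than the set blocked by the greedy method in the same situation (which, by soundness, also restores security). The per-edge argument from the completeness proof of Algorithm~\ref{blocking} then applies unchanged: if a needed agent were blocked, then by Lemma~\ref{ffprime} some variable occurring in a Horn clause of a rewriting of $\Phi$ would have to be controlled simultaneously by a blocked agent and by a distinct agent effecting the secure-to-secure transition, contradicting the fact that $\mathit{Vars}$ is partitioned among the agents. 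Extending from single edges to paths of secure states by induction on the path length, exactly as in the proof for Algorithm~\ref{blocking}, yields completeness.

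The main obstacle I anticipate is the non-determinism itself: \emph{ScanOracle} may return several equally good optimal subsets and one is chosen at random, so I must verify that the completeness argument is independent of this choice and that a valid subset always exists. Existence is not problematic, since blocking all of $\mathcal{A}'$ trivially restores security and the oracle searches among at least that possibility, so the \textbf{while} loop always terminates with some valid $I_i$; and since all returned subsets have the same cardinality and each individually passes the simulation test, the single-edge contradiction about a doubly-controlled variable applies verbatim to any of them. Hence both soundness and completeness hold for every run of Algorithm~\ref{non_det}.
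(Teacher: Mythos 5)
Your proposal is correct and follows essentially the same route as the paper, which justifies both properties by observing that Algorithm~\ref{non_det} is the greedy method extended with an oracle, so the soundness and completeness of Algorithm~\ref{blocking} transfer directly. The paper's own argument is only a two-sentence remark to this effect; your version fills in the details (the simulation-gated exit condition for soundness, and the fact that the oracle blocks no more agents than the greedy method for completeness) without departing from that approach.
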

\end{SHORT}

Let us call \emph{optimal} a method that blocks the smallest sets of
agents to ensure the security of the system. The greedy blocking method
guarantees just one of the optimality properties, i.e.~security, but it
cannot guarantee to block the smallest sets of agents. We thus say that
the greedy blocking method is a \emph{sub-optimal solution}.
\begin{LONG}

\end{LONG}%
What can further be proved is that the comparison of the solutions
computed in the non-deterministic method generates an optimal solution.
This is quite obvious, since the solutions computed are all the possible
combinations, and thus the best solution is included in this set. What
the algorithm does is find the smallest set of agents that need to be
blocked.\footnote{There may exist more than one solution with the
smallest number of agents blocked. The approach of Algorithm
\ref{non_det} is to compare everything with everything, so the chosen
solution is the last examined one.}

\begin{theorem}
Algorithm \ref{non_det} computes an optimal solution.
\end{theorem}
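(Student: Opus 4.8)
The plan is to reduce optimality to a statement about cardinalities and then read it off from the search that \textit{ScanOracle} performs. By the preceding soundness result, any set $\mathcal{B}$ that Algorithm~\ref{non_det} outputs already guarantees security; hence ``optimal'' only requires that, among all subsets $J\subseteq\mathcal{A}'$ whose execution of the actions in $\Gamma|_J$ leaves every $\phi\in\Phi$ false (call such a $J$ \emph{safe}), the algorithm selects one of maximum cardinality, so that the complement $\mathcal{A}'\setminus J$ that goes into $\mathcal{B}$ is of minimum cardinality. So the first step is to fix this reformulation, let $k^\ast=\max\{\,|J| : J\subseteq\mathcal{A}'\text{ safe}\,\}$, and take as the target the claim that the set $I_i$ chosen in the last line of Algorithm~\ref{non_det} satisfies $|I_i|=k^\ast$.

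Next I would analyse the loop. Iterating \textit{ScanOracle} amounts to a breadth-first exploration of the subsets of $\mathcal{A}'$ by \emph{decreasing} cardinality: starting from $|\mathcal{A}'|$, at each pass the cardinality drops by one, every candidate subset of that size is simulated in parallel (so no candidate is lost through the parallelism), and exactly the subsets falsifying the maximum number of critical formulas are retained in $I'$; the loop halts the first time a retained subset falsifies all of $\Phi$. Two observations then do the work. First, a safe subset falsifies \emph{all} $|\Phi|$ critical formulas, which is the absolute maximum of the quantity that \textit{ScanOracle} maximises; hence as soon as a safe subset is generated at some level it is retained, and the loop stops at that level. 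Second, by downward induction on the level $\ell$ one shows that the collection retained after processing level $\ell$ still contains a superset of some maximum-size safe set $J$ (at the top level this is $\mathcal{A}'\supseteq J$ itself); consequently, when $\ell$ reaches $k^\ast$ a safe set of size $k^\ast$ is among those generated and retained, so the loop terminates with $|I_i|=k^\ast$, and it cannot terminate earlier since by definition of $k^\ast$ no safe subset of larger size exists. Finally, $\mathcal{A}'\setminus I_i$ is then a minimum blocking set, and by soundness it keeps the system secure, which is exactly optimality; ties (several safe sets of size $k^\ast$) are irrelevant, as they all have the same cardinality whatever tie-break the algorithm uses.

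The delicate point — the one I expect to be the real obstacle — is the downward-induction step: it must be argued that the pruning in \textit{ScanOracle} (keeping only the maximisers of the false-formula count, and recursing on their union) never discards every size-$\ell$ superset of a maximum safe set $J$. The clean way is to observe that the implementation may, without changing its asymptotic $\mathcal{O}(n^3)$ cost, enumerate \emph{all} size-$\ell$ subsets of $\mathcal{A}'$ at level $\ell$ rather than only subsets of the previously retained union, so that $J$ itself is generated once $\ell=k^\ast$; alternatively, one shows directly that at each level some superset of $J$ attains the maximal false-formula count and is therefore retained. Either route suffices, but this is precisely where the argument has to be carried out carefully, rather than resting on the informal ``all combinations are examined'' slogan.
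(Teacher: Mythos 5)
Your reformulation of optimality (maximize $|J|$ over safe $J\subseteq\mathcal{A}'$, equivalently minimize $|\mathcal{A}'\setminus J|$) and your reading of the loop as a level-by-level descent in cardinality are both faithful to Algorithm~\ref{non_det}, and you have correctly located the crux. The paper itself does not carry out this analysis: its entire justification is the informal remark preceding the theorem that ``the solutions computed are all the possible combinations, and thus the best solution is included in this set,'' which is precisely the slogan you decline to rest on. So your proposal is more honest than the paper's argument --- but it does not close the gap either, and the gap is genuine. \textit{ScanOracle} at each level generates only the size-$(|I|-1)$ subsets of the \emph{retained union} $I$ and keeps only the maximizers of the false-formula count; nothing guarantees that some superset of a maximum safe set $J$ is among those maximizers. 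Your downward induction therefore needs exactly the invariant you flag, and that invariant is not automatic: safeness is not monotone under inclusion (with $\phi = v_1 \wedge \neg v_2$, initially false, executing both requested changes $v_1\mapsfrom\top$ and $v_2\mapsfrom\top$ keeps $\phi$ false while executing only the first makes it true), so a size-$\ell$ superset of $J$ may falsify strictly fewer formulas than some unrelated size-$\ell$ set, be pruned, and take every route to $J$ with it.

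Of your two proposed repairs, the second (prove directly that some superset of $J$ always attains the level maximum) is the one that would have to work, and it is exactly what is missing; the first is not available as stated, because enumerating \emph{all} size-$\ell$ subsets of $\mathcal{A}'$ at each level costs $\binom{|\mathcal{A}'|}{\ell}$ simulations, which is exponential and incompatible with the $\mathcal{O}(n^3)$ bound the paper claims for Algorithm~\ref{non_det} (its cost analysis relies on \textit{ScanOracle} generating only $|I|$ candidate subsets per call). In short: your proof strategy is the right one and improves on the paper's, but the theorem as stated is not established by either your sketch or the paper's remark until the retention invariant for \textit{ScanOracle} is proved (or the algorithm is changed to a genuinely exhaustive search, at which point the cost theorem must be revised).
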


We consider here the specific problem of blocking underhand attacks as
the problem of keeping the security formula false when agents apply for
changing variables.
The computational complexity of a problem is defined as the cost of the
best solution. In this case, we cannot claim that the solution is
optimal and therefore we only have an upper bound result.

\begin{theorem}
Blockage of underhand attacks is a polynomially solvable problem on deterministic machines.
\end{theorem}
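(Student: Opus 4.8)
The plan is to reduce the statement ``Blockage of underhand attacks is a polynomially solvable problem on deterministic machines'' to the cost analysis already carried out for Algorithm~\ref{blocking}, together with the soundness and completeness results. First I would recall the formalization fixed just before the statement: an instance of the blocking problem consists of a secure state $S_t$, a set $\Phi$ of critical formulas (with $n = |\Phi|$ the bound on the number of actions executed per clock tick), and a requested action set $\Gamma_{t+1}$; a solution is a set $\mathcal{B}$ of agents to block so that $S_t \stackrel{\Gamma_{t+1}\setminus\Gamma|_{\mathcal{B}}}{\longmapsto} S_{t+1}$ lands in a secure $S_{t+1}$. So the problem is exactly: given such an instance, produce a blocking set witnessing that the transition can be made safe.

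The core of the argument is then: the greedy blocking method of Algorithm~\ref{blocking} is a deterministic procedure that, on every such instance, terminates and outputs a set $\mathcal{B}$ with the required property. Soundness (the already-proved theorem that Algorithm~\ref{blocking} never brings the system into an insecure state when started from a secure one) guarantees that the output $\mathcal{B}$ is indeed a correct solution --- after removing $\Gamma|_{\mathcal{B}}$, the simulation returns $\Phi' = \emptyset$, so $S_{t+1}$ is secure. Termination is built into the while loop: each iteration adds a fresh agent to $\mathcal{B}$ (line~6) and there are at most $n$ agents requesting actions, so the loop runs at most $n$ times. Finally, by the computational-cost theorem, the whole run costs $\mathcal{O}(n^3)$, which is polynomial in the size of the instance. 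Hence a deterministic polynomial-time algorithm solves the problem, which is precisely the claim.

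Concretely, the steps I would carry out, in order, are: (i) state what ``blockage of underhand attacks'' means as a decision/search problem, in terms of the state graph and the transition relation $\stackrel{\Gamma}{\longmapsto}$; (ii) invoke Algorithm~\ref{blocking} as the candidate solver and note that it is deterministic; (iii) cite the soundness theorem to see the output is always a valid blocking set, so the algorithm really solves the problem (using completeness only to note that no ``too aggressive'' blocking occurs, i.e.\ that we have not merely blocked everything trivially --- though for mere polynomial solvability soundness suffices); (iv) cite the $\mathcal{O}(n^3)$ cost theorem and observe $n = |\Phi|$ is part of the instance description, so this is polynomial; (v) conclude.

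The only real subtlety --- and the one I expect to be the main obstacle in writing this cleanly --- is pinning down the \emph{size of the input} so that ``$\mathcal{O}(n^3)$'' genuinely counts as polynomial-in-the-input. The cost bound is phrased in terms of $n = |\Phi|$ under the worst-case assumption that every one of the $n$ variables occurs in every formula, so the relevant parameters (number of agents, number of variables, size of each critical formula, size of $\Gamma_{t+1}$) are all $\mathcal{O}(n)$ or $\mathcal{O}(n^2)$; I would make this explicit so that the reduction to the earlier theorem is airtight. Once the input measure is fixed, the rest is an immediate assembly of the already-established soundness and cost results, and there is no further hard step.
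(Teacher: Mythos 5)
Your proposal is correct and follows essentially the same route as the paper, whose entire proof is the one-line observation that Algorithm~\ref{blocking} is deterministic, sound, complete, and of polynomial cost. The extra care you take in pinning down the input size and termination is a reasonable elaboration of that same argument, not a different approach.
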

\begin{LONG}
\begin{proof}
Algorithm \ref{blocking} is deterministic, sound and complete, and its cost is polynomial.
\end{proof}
\end{LONG}

Analogously, the next result is a consequence of the results about
soundness, completeness and cost of Algorithm~\ref{non_det}, again in
form of an upper bound.

\begin{theorem}
Optimal blockage of underhand attacks is a polynomially solvable problem on non-deterministic machines.
\end{theorem}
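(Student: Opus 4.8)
The plan is to argue that this theorem is an immediate corollary of the results already established for the non-deterministic blocking method, exactly mirroring the proof of the preceding theorem for the deterministic case. Concretely, I would take Algorithm~\ref{non_det} together with its subroutine \emph{ScanOracle} (Algorithm~\ref{oracolo}) as the witnessing procedure for the problem of \emph{optimal blockage of underhand attacks}, where this problem is understood --- as stated just above --- as that of keeping every security formula in $\Phi$ false when agents apply for changing variables, while blocking a set of agents of minimum cardinality.

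The first step is to observe that Algorithm~\ref{non_det} runs on a non-deterministic machine: the call $I' = ScanOracle(I)$ spawns, in parallel, one simulation per subset of $I$ of cardinality $|I|-1$, and this parallel guess-and-check behaviour is precisely the non-deterministic ingredient of the method. The second step is to invoke soundness and completeness: by the theorem stating that Algorithm~\ref{non_det} is sound and complete, the procedure never drives the system from a secure state into an insecure one, and it permits reaching every secure state from any secure state. The third step is optimality: by the theorem asserting that Algorithm~\ref{non_det} computes an optimal solution, the set $\mathcal{A}'\backslash I_i$ placed in $\mathcal{B}$ is of minimum cardinality among all agent sets whose blocking guarantees security, since the oracle enumerates all relevant combinations and the algorithm retains a largest subset of agents on which all critical formulas stay false. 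Finally, the fourth step is cost: by the theorem on the computational cost of Algorithm~\ref{non_det}, its running time is $\mathcal{O}(n^3)$, hence polynomial in the size of the input (the number of variables, equivalently the cardinality $n$ of $\Phi$).

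Putting these four facts together yields a non-deterministic, sound, complete, optimal procedure that runs in polynomial time, which is exactly the statement that optimal blockage of underhand attacks is polynomially solvable on non-deterministic machines. I do not expect any genuine obstacle here; the only point that deserves a sentence of care is to make explicit that ``the problem'' and ``the solution'' refer to the same object --- i.e.\ that computing the minimum set of agents to block is precisely what ``optimal blockage'' means --- so that the complexity claim concerns this problem and not a weaker one, and to note that the use of non-determinism is essential to the oracle step (even though, trivially, polynomial time on deterministic machines would already imply it).
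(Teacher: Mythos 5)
Your proposal matches the paper's intended argument: the paper states that this theorem follows, analogously to the deterministic case, from the soundness, completeness, optimality, and $\mathcal{O}(n^3)$ cost results for Algorithm~\ref{non_det}, which is exactly the chain of citations you assemble. No gap; this is essentially the same proof.
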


\section{Conclusions}
\label{concl}


\begin{LONG}
\noindent	
In this work, we have dealt with multi-agent systems whose security
properties depend on the values of sets of logical formulas (the
critical formulas of the systems). We assumed that the values of these
formulas can change due to actions performed by the agents of the
system, and that some attacks can be performed by malicious agents that
are authorized within the system itself (in other terms, users of the
system). These attacks conducted from inside are underhand, and we focus
specifically on those attacks that are performed by groups of
individuals that do not reveal their belonging to such groups, that we
call hidden coalitions.

We have introduced a deterministic method, implemented by the greedy
blocking algorithm, which prevents attacks to the system carried out by
hidden coalitions formed by agents that are users of the system itself.
The method based on this algorithm is sound and complete,
but the algorithm is not optimal as it cannot block the smallest set of
potentially dangerous agents.
%
%
The method is thus extended to a non-deterministic version that can be
used, in future investigations, to identify optimal solutions and to
study extensively the computational properties of the solution, from
both deterministic and non-deterministic sides. The method is obtained
by introducing an oracle within the deterministic version, which makes
the soundness and completeness results directly applicable to the
non-deterministic version as well.

The starting point of our approach to model multi-agent systems is 
Coalition Logic~\cite{pauly,pauly2,pauly3}, a cooperation
Logic that implements ideas of Game Theory. Another cooperation logic
that works with coalitions is the Alternating-time Temporal
Logic~\cite{atl3,atl,atl2}. A widely used logic, specifically thought
for dealing with strategies and multi-agent systems, is the Quantified
Coalition Logic~\cite{quantified2,quantified,quantified3}. A specific
extension, also used for agents in multi-agent systems is
CL-PC~\cite{cl-pc,cl-pc3}, and this is indeed the version of Coalition
Logic that we started from.
\end{LONG}

\begin{SHORT}
\noindent	
We have given a first approach to hidden coalitions by introducing a
deterministic method that blocks the actions of potentially dangerous
agents, i.e.~possibly belonging to such coalitions. We have also given a
non-deterministic version of this method that blocks the smallest set of
potentially dangerous agents. Our two blocking methods are sound and
complete, and we have calculated their computational cost.

The starting point of our approach to model multi-agent systems is
Coalition Logic~\cite{pauly,pauly3}, a cooperation Logic that implements
ideas of Game Theory. Another cooperation logic that works with
coalitions is the Alternating-time Temporal Logic, e.g.~\cite{atl}. A
widely used logic, specifically thought for dealing with strategies and
multi-agent systems, is the Quantified Coalition Logic~\cite{quantified}.
A specific extension, also used for agents in multi-agent systems is
CL-PC~\cite{cl-pc,cl-pc3}, and this is indeed the version of Coalition
Logic that we started from.
\end{SHORT}

The notion of hidden coalition is a novelty, and more generally, to the
best of our knowledge, no specific investigation exists that deals with
security in open systems by means of a notion of underhand attack. The
system presented here is a multi-agent one, where we did not discuss how
these coalitions are formed or the negotiations that can take place
before the creation of the coalitions~\cite{negotiation,negotiation2}.
For future work, it will be interesting to consider in more detail the
non-monotonic aspects underlying the problem of underhand attacks by
hidden coalitions, e.g.~to formalize: the mental attitudes and
properties of the intelligent agents that compose the system, how agents
enter/exit/are banned from a system or enter/exit a hidden coalition,
and the negotiations between the agents for establishing the common goal
and synchronizing/organizing their actions. In this work, we give a way
to protect the system, without making a distinction between the case in
which the agents that make the attack are actual members of a coalition
or not. If the system is equipped with explicit/implicit coalition test
methods, this can make up a significant difference in terms of
usefulness of our approach.


A specific analysis of the computational properties of our blocking
methods, in particular an analysis of worst, average, and practical
cases, will be subject of future work. Results of lower bound for the
blocking problem and the optimal blocking problem, and the computational
cost of the optimal blocking problem on deterministic machines are in
particular important aspects to be investigated.


\section*{ACKNOWLEDGEMENTS}
\noindent	
The work presented in this paper was partially supported by the
FP7-ICT-2007-1 Project no.~216471, ``AVANTSSAR: Automated Validation of
Trust and Security of Service-oriented Architectures''.

\renewcommand{\baselinestretch}{0.98}
\bibliographystyle{apalike}
{\small
\bibliography{erisa}}

\begin{thebibliography}{}

\bibitem[{\AA}gotnes et~al., 2007]{quantified2}
{\AA}gotnes, T., Dunne, P.~E., Van Der~Hoek, W., and Wooldridge, M. (2007).
\newblock Logics for coalitional games.
\newblock In {\em Proceedings of the Workshop on Logic, Rationality and
  Interaction}.

\bibitem[{\AA}gotnes et~al., 2008]{quantified}
{\AA}gotnes, T., van~der Hoek, W., and Wooldridge, M. (2008).
\newblock Quantified coalition logic.
\newblock {\em Synthese}, 165(2).

\bibitem[Alur et~al., 1998a]{atl3}
Alur, R., Henziger, T.~A., Kupferman, O., and Vardi, M.~Y. (1998a).
\newblock Alternating refinement relations.
\newblock In {\em Proceedings of CONCUR '98}, pages 163--178. Springer.

\bibitem[Alur et~al., 1998b]{atl}
Alur, R., Henzinger, T.~A., and Kupferman, O. (1998b).
\newblock Alternating-time temporal logic.
\newblock In {\em Compositionality: The Significant Difference}, pages 23--60.
  Springer.

\bibitem[Aumann and Hart, 1994]{game}
Aumann, R.~J. and Hart, S., editors (1994).
\newblock {\em Handbook of Game Theory}, volume~2.
\newblock Elsevier.

\bibitem[Chaum, 1981]{Chaum81}
Chaum, D. (1981).
\newblock Untraceable electronic mail, return addresses and digital pseudonyms.
\newblock {\em Communications of the ACM}, 24(2).

\bibitem[Davidsson, 2001]{David01}
Davidsson, P. (2001).
\newblock Multiagent based simulation: beyond social simulation.
\newblock {\em In Multi-Agent-Based Simulation}, 1979:97--107.

\bibitem[Hahn et~al., 2007]{RePEc:jas:jasssj:2006-13-4}
Hahn, C., Fley, B., Florian, M., Spresny, D., and Fischer, K. (2007).
\newblock Social reputation: a mechanism for flexible self-regulation of
  multiagent systems.
\newblock {\em Journal of Artificial Societies and Social Simulation}, 10.

\bibitem[Kraus, 1997]{negotiation2}
Kraus, S. (1997).
\newblock Negotiation and cooperation in multi-agent environments.
\newblock {\em Artificial Intelligence}, 94:79--97.

\bibitem[Oravec and Fogel, 2006]{cl-pc2}
Oravec, V. and Fogel, J. (2006).
\newblock Coalition logic of propositional control based multi agent system
  modeling.
\newblock {\em Proceedings of IEEE Conference on Mechatronics}, pages 288--291.

\bibitem[Pauly, 2001]{pauly}
Pauly, M. (2001).
\newblock {\em Logic for social software}.
\newblock PhD thesis, Institute for Logic Language and Computation, University
  of Amsterdam.

\bibitem[Pauly, 2002]{pauly2}
Pauly, M. (2002).
\newblock A modal logic for coalition power in games.
\newblock {\em Journal of Logic and Computation}, 12(1):149--166.

\bibitem[Pauly and Parikh, 2003]{pauly3}
Pauly, M. and Parikh, R. (2003).
\newblock Game logic - an overview.
\newblock {\em Studia Logica}, 75(2):165--182.

\bibitem[Rafaeli and Hutchison, 2003]{RafaeliHutchison03}
Rafaeli, S. and Hutchison, D. (2003).
\newblock A survey of key management for secure group communication.
\newblock {\em ACM Computing Surveys (CSUR)}, 35(3):309--329.

\bibitem[Sandholm, 2004]{negotiation}
Sandholm, T. (2004).
\newblock Agents in electronic commerce: Component technologies for automated
  negotiation and coalition formation.
\newblock {\em Autonomous Agents and Multi-Agent Systems}, pages 73--96.

\bibitem[Sorniotti and Molva, 2010]{SorniottiMolva10}
Sorniotti, A. and Molva, R. (2010).
\newblock {Secret Interest Groups (SIGs) in social networks with an
  implementation on Facebook}.
\newblock In {\em Proceedings of SAC 2010}. ACM Press.

\bibitem[Troquard et~al., 2009]{cl-pc3}
Troquard, N., van~der Hoek, W., and Wooldridge, M. (2009).
\newblock A logic of games and propositional control.
\newblock In {\em Proceedings of AAMAS'09}, pages 961--968. ACM Press.

\bibitem[van~der Hoek et~al., 2005]{strategy}
van~der Hoek, W., Jamroga, W., and Wooldridge, M. (2005).
\newblock A logic for strategic reasoning.
\newblock In {\em Proceedings of AAMAS'05}, pages 701--708. ACM Press.

\bibitem[van~der Hoek and Wooldridge, 2005]{cl-pc}
van~der Hoek, W. and Wooldridge, M. (2005).
\newblock On the logic of cooperation and propositional control.
\newblock {\em Artificial Intelligence}, 164(1-2):81--119.

\bibitem[Walther et~al., 1997]{atl2}
Walther, D., van~der Hoek, W., and Wooldridge, M. (1997).
\newblock Alternating-time temporal logic with explicit strategies.
\newblock In {\em Proceedings of the 11th conference on Theoretical aspects of
  rationality and knowledge}, pages 269--278.

\bibitem[Wiehler, 2004]{Wiehler04}
Wiehler, G. (2004).
\newblock {\em Mobility, Security and Web Services}.
\newblock Publicis Corporate Publishing.

\bibitem[Wooldridge and Dunne, 2004]{quantified3}
Wooldridge, M. and Dunne, P.~E. (2004).
\newblock On the computational complexity of qualitative coalitional games.
\newblock {\em Artificial Intelligence}, 158:27--73.

\bibitem[Wooldridge and Jennings, 1995]{ag2}
Wooldridge, M. and Jennings, N. (1995).
\newblock Intelligent agents: Theory and practice.
\newblock {\em Knowledge Engineering Review}, 10(2).

\bibitem[Wooldridge and Jennings, 1998]{ag3}
Wooldridge, M. and Jennings, N. (1998).
\newblock Pitfalls of agent-oriented development.
\newblock In {\em Proceedings of Agents'98}, pages 385--391. ACM Press.

\end{thebibliography}
\renewcommand{\baselinestretch}{1}

\end{document}